\newtheorem{theorem}{Theorem}
\newtheorem{lemma}[theorem]{Lemma}
\newtheorem{proposition}[theorem]{Proposition}
\newtheorem{definition}{Definition}
\newtheorem{assumption}{Assumption}
\newcommand{\E}{\mathbb{E}}
\newcommand{\R}{\mathbb{R}}
\newcommand{\Prob}{\mathbb{P}}
\DeclareMathOperator*{\argmax}{arg\,max}
\title{\Large \textbf{Ecosystem Competition and Cross-Market Subsidization: A Dynamic Theory of Platform Pricing}\thanks{I thank participants at various seminars for valuable comments.}}
\author{
Liang Chen
}
\date{January 2026}
\begin{document}

\maketitle

\begin{abstract}
\noindent Platform giants in China have operated with persistently compressed margins in highly concentrated markets for much of the past decade, despite market shares exceeding 60\% in core segments. Standard theory predicts otherwise: either the weaker firm exits, or survivors raise prices to monopoly levels. We argue the puzzle dissolves once firms are viewed as ecosystem optimizers rather than single-market profit maximizers. We develop a dynamic game in which a firm's willingness to subsidize depends on the spillover value its users generate in adjacent markets---what we call \textit{ecosystem complementarity}. When this complementarity is strong enough, perpetual below-cost pricing emerges as the unique stable equilibrium. The result is not predation in the classical sense; there is no recoupment phase. It is a permanent state of subsidized competition, rational for each firm individually but potentially inefficient in aggregate. We characterize the equilibrium, establish its dynamic stability, and show that welfare losses compound over time as capital flows into subsidy wars rather than innovation. The model's predictions are consistent with observed patterns in Chinese platform markets and suggest that effective antitrust intervention should target cross-market capital flows rather than prices.

\bigskip

\noindent \textbf{arXiv:} CS.IT; cs.GT; econ.TH

\medskip

\noindent \textbf{JEL:} C73, D43, L13, L41, L86

\medskip

\noindent \textbf{Keywords:} Platform Competition, Dynamic Games, Cross-Subsidization, Network Effects, Ecosystem Strategy
\end{abstract}

\newpage
\onehalfspacing

\section{Introduction}

Chinese platform firms have operated with persistently compressed margins in highly concentrated markets for the better part of a decade. This is puzzling. The textbook prediction is unambiguous: price wars end when either the weaker firm exits or the survivors coordinate on higher prices \citep{tirole1988theory}. Neither has happened here. The market has consolidated into tight duopoly, yet margins remain compressed and subsidies continue. Something in the standard framework is missing.

The Chinese food delivery market illustrates the pattern. Meituan and Alibaba's Ele.me together control roughly 95\% of the market, with Meituan holding a 67\% share.\footnote{Market share data from industry reports; see Section 4 for detailed sources.} By any textbook measure, this is a mature duopoly with substantial pricing power. Yet pure delivery margins remain modest---typically 10-15\% for market leaders---far below what standard models predict for dominant platforms with 60\%+ market share. When JD.com entered in early 2025, it triggered a subsidy war costing an estimated RMB 100 billion in six months. Daily orders reached 150 million at peak. None of the three firms has shown any intention of exiting or raising prices.

The standard toolkit from industrial organization struggles with this pattern. Predatory pricing, as \citet{milgrom1982predation} and others have analyzed it, presupposes a clear endgame: drive out the rival, then recoup losses through monopoly rents. But recoupment logic fails here. These firms are not trying to eliminate each other; they are locked in a permanent war of subsidization with no exit in sight. Something else must be going on.

This paper argues that the answer lies in what happens \textit{outside} the contested market. A Meituan user who orders lunch becomes a potential borrower on Meituan Finance. An Ele.me transaction generates data that feeds Alibaba's credit-scoring algorithm. The delivery business operates at thin margins, serving as a user-acquisition channel for adjacent services where returns are higher. The firm maximizes the value of its portfolio, not profit in any single market. We formalize this logic under the label \textit{ecosystem competition}.

The central object in our model is the \textit{ecosystem complementarity function} $\Psi(\cdot)$, which maps primary market share to spillover value in adjacent markets. When the marginal complementarity $\Psi'(m)$ exceeds a critical threshold, the calculus of competition changes in a discontinuous way. Each additional user generates more value through cross-market externalities than the subsidy costs to acquire them. Below-cost pricing becomes individually rational---not as a temporary investment, but as a permanent equilibrium strategy.

The analysis yields three sets of results. First, we characterize the Markov Perfect Equilibrium and show that perpetual subsidization emerges when ecosystem complementarity exceeds a threshold. Second, this equilibrium is dynamically stable: unlike war-of-attrition games, there is no race to outlast the opponent. Each firm's subsidization is a best response to the other's ecosystem incentives, making the equilibrium self-enforcing. Third, when the complementarity function is locally convex, competitive intensity exhibits bifurcation---a discontinuous jump from low to high subsidization as parameters cross a critical value.

The welfare implications cut both ways. Consumers benefit from subsidized services in the short run. But capital absorbed by subsidy wars is unavailable for innovation, and soft budget constraints distort investment incentives economy-wide. Over sufficiently long horizons, these dynamic losses dominate the static consumer gains. Effective regulatory intervention should target cross-market capital flows rather than prices.

The paper relates to several literatures. The platform competition literature \citep{rochet2003platform,armstrong2006competition} has clarified the economics of two-sided markets, but typically treats platforms as single-market entities. \citet{eisenmann2011platform} introduced the term ``platform envelopment'' for cross-market entry strategies, without formalizing the equilibrium implications. Our model provides that formalization. We also contribute to the predatory pricing literature \citep{bolton2000predatory} by showing that below-cost pricing can be sustained without recoupment when cross-market spillovers are strong, and to the literature on data as a strategic asset \citep{farboodi2021data,acemoglu2022data}.

The model is deliberately stylized. Real ecosystems involve many interconnected services and complex organizational structures; we abstract from this richness to isolate the core mechanism. Section 4 offers preliminary empirical support, but a full test requires data we do not have.

The paper proceeds as follows. Section 2 sets up the model. Section 3 characterizes equilibrium. Section 4 presents empirical evidence. Section 5 analyzes welfare and policy. Section 6 concludes.

\section{Model}

We consider an infinite-horizon discrete-time economy with two profit-maximizing firms indexed by $i \in \{I, E\}$, where $I$ denotes the \textit{incumbent} and $E$ denotes the \textit{ecosystem entrant}. Firms compete in a primary market (e.g., food delivery) while potentially deriving value from adjacent markets (e.g., fintech, advertising).

\subsection{Market Structure and Demand}

A unit mass of consumers with heterogeneous preferences populates the primary market. In each period $t$, consumers choose between platforms based on effective price $p_{i,t} - s_{i,t}$, where $p_{i,t}$ is the posted price and $s_{i,t} \geq 0$ is the per-transaction subsidy offered by firm $i$. Let $m_t \in [0,1]$ denote firm $I$'s market share at time $t$, so firm $E$ holds share $1-m_t$.

We model demand through a reduced-form specification that captures essential features of platform competition:

\begin{assumption}[Demand Dynamics]\label{ass:demand}
Market share evolves according to
\begin{equation}\label{eq:dynamics}
m_{t+1} = m_t + \gamma\big[(s_{E,t} - s_{I,t}) - \kappa(p_{I,t} - p_{E,t})\big] + \eta_t
\end{equation}
where $\gamma > 0$ measures consumer responsiveness to price differentials, $\kappa > 0$ captures the relative weight on posted prices versus subsidies, and $\eta_t \sim \mathcal{N}(0, \sigma^2)$ represents aggregate demand shocks.
\end{assumption}

The demand specification embeds several substantive assumptions. First, market share adjusts gradually rather than jumping discretely, reflecting switching costs and habit formation in platform adoption. Second, subsidies and posted prices enter asymmetrically, capturing the behavioral observation that consumers respond more strongly to explicit transfers than to equivalent price reductions \citep{thaler1985mental}. Third, the stochastic component $\eta_t$ generates variation in competitive intensity across periods.

\subsection{Cost Structure}

Both firms face constant marginal cost $c > 0$ per transaction. Let $Q_i(m)$ denote firm $i$'s transaction volume given market share $m$, where we normalize $Q_I(m) = m$ and $Q_E(m) = 1-m$. Operating profit in the primary market for firm $i$ is:
\begin{equation}
\pi_i^{PM}(m_t, p_{i,t}, s_{i,t}) = (p_{i,t} - c - s_{i,t}) \cdot Q_i(m_t)
\end{equation}

\subsection{Ecosystem Complementarity}

The key modeling device is the \textit{ecosystem complementarity function}, which captures value spillovers from primary market activity to adjacent businesses.

\begin{definition}[Ecosystem Complementarity]\label{def:ecosystem}
For firm $i$ with primary market share $q_i$, the ecosystem complementarity function $\Psi_i: [0,1] \rightarrow \R_+$ represents the present discounted value of spillovers to adjacent markets generated by serving share $q_i$ of primary market users.
\end{definition}

We distinguish between firm types based on ecosystem structure:

\begin{assumption}[Asymmetric Ecosystem Value]\label{ass:asymmetry}
The incumbent $I$ operates as a focused competitor with $\Psi_I(m) \equiv 0$. The ecosystem entrant $E$ derives positive spillover value with complementarity function $\Psi_E(1-m)$ satisfying:
\begin{enumerate}
    \item[(i)] $\Psi_E(0) = 0$ and $\Psi_E(q) > 0$ for $q > 0$
    \item[(ii)] $\Psi_E$ is twice continuously differentiable
    \item[(iii)] $\Psi'_E(q) > 0$ for all $q \in (0,1)$
    \item[(iv)] Local convexity: $\Psi''_E(q) > 0$ for $q \in [\underline{q}, \bar{q}]$ where $0 < \underline{q} < \bar{q} < 1$
\end{enumerate}
\end{assumption}

The assumption that $\Psi_I \equiv 0$ is made for analytical clarity; our qualitative results extend to cases where both firms derive ecosystem value, provided the asymmetry $\Psi_E > \Psi_I$ is sufficiently large.

Condition (iv) is central to our analysis. Local convexity captures the notion that data-driven ecosystem value exhibits \textit{increasing returns to scale} within an intermediate range of market penetration. This property generates the bifurcation dynamics characterized in Section 3: when market share crosses into the convex region, the marginal value of additional users accelerates, transforming competitive incentives discontinuously.

The microfoundations of $\Psi_E$ derive from three channels:

\textbf{Data Externalities.} Primary market transactions generate user behavior data that reduces information asymmetries in adjacent markets. For instance, food delivery frequency and basket composition predict creditworthiness for consumer lending. Let $D(q)$ denote the data stock generated by serving share $q$, and let $\Delta\Pi^{adj}(D)$ represent the profit improvement in adjacent markets from data stock $D$. Then:
\begin{equation}\label{eq:data}
\Psi_E^{data}(q) = \sum_{\tau=0}^{\infty} \delta^\tau \cdot \Delta\Pi^{adj}(D_\tau(q))
\end{equation}
where $\delta \in (0,1)$ is the discount factor.

\textbf{Traffic Diversion.} Primary market users can be redirected to higher-margin services within the ecosystem. If $\lambda$ denotes the conversion rate and $\bar{\pi}$ the margin on converted users:
\begin{equation}\label{eq:traffic}
\Psi_E^{traffic}(q) = \lambda \cdot \bar{\pi} \cdot q
\end{equation}

\textbf{Network Reinforcement.} In markets with network effects, primary market presence strengthens the overall ecosystem's competitive position. This creates superadditive value:
\begin{equation}\label{eq:network}
\Psi_E^{network}(q) = \nu \cdot q \cdot \bar{q}
\end{equation}
where $\bar{q}$ represents ecosystem presence in complementary markets and $\nu$ captures network intensity.

The aggregate complementarity function combines these channels:
\begin{equation}\label{eq:psi_total}
\Psi_E(q) = \Psi_E^{data}(q) + \Psi_E^{traffic}(q) + \Psi_E^{network}(q)
\end{equation}

A key parameter governing equilibrium behavior is the \textit{marginal ecosystem value}:

\begin{definition}[Marginal Ecosystem Value]\label{def:mev}
The marginal ecosystem value at market share $q$ is $\psi(q) \equiv \Psi'_E(q)$, representing the incremental ecosystem benefit from a marginal increase in primary market share.
\end{definition}

Figure \ref{fig:synergy} illustrates the shape of $\Psi_E(\cdot)$ implied by our assumptions. The solid curve exhibits local convexity in the intermediate range $[\underline{q}, \bar{q}]$, reflecting increasing returns to data accumulation. The dashed line shows standard diminishing returns for comparison. The critical condition for perpetual subsidization---$\psi(q) > 1$, meaning each marginal user generates more ecosystem value than the subsidy cost---is satisfied in the shaded region. Once a firm's market share enters this zone, the incentive to subsidize intensifies rather than diminishes, setting the stage for the dynamics we analyze in Section 3.

\begin{figure}[t]
    \centering
    \includegraphics[width=0.85\textwidth]{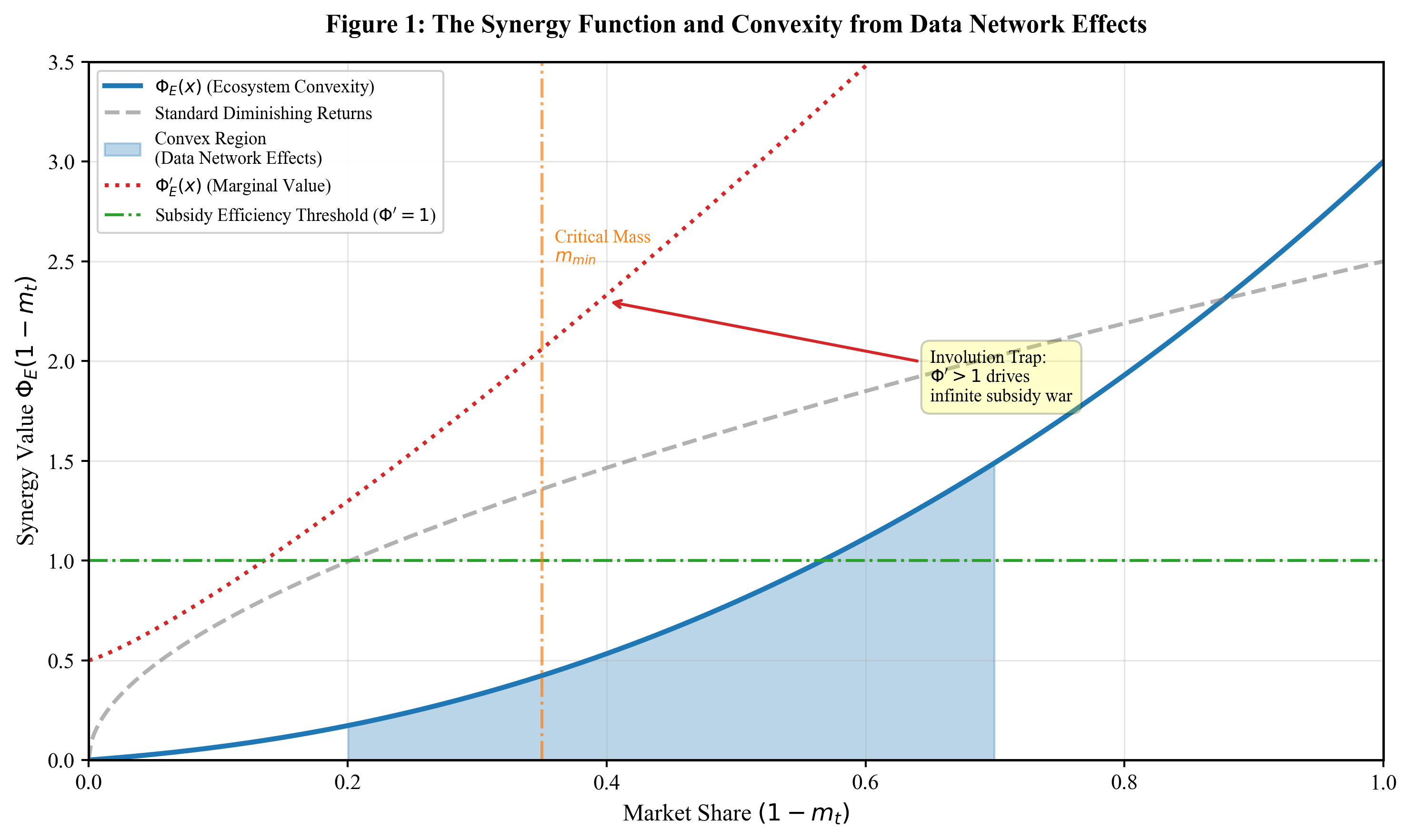}
    \caption{\textbf{The Ecosystem Complementarity Function.} The solid blue curve shows $\Psi_E(q)$ with local convexity from data network effects. The red dotted curve plots marginal value $\psi(q) = \Psi'_E(q)$. The shaded region indicates where $\psi > 1$, triggering the ``involution trap.'' The vertical dashed line marks critical mass $m_{min}$. For contrast, the gray dashed curve shows standard diminishing returns.}
    \label{fig:synergy}
\end{figure}

\subsection{Firm Objectives and Strategies}

Firms maximize the present discounted value of total profits. For the incumbent:
\begin{equation}\label{eq:vi}
V_I(m_t) = \max_{\{p_{I,\tau}, s_{I,\tau}\}_{\tau \geq t}} \E_t\left[\sum_{\tau=t}^{\infty} \delta^{\tau-t} \cdot \pi_I^{PM}(m_\tau, p_{I,\tau}, s_{I,\tau})\right]
\end{equation}

For the ecosystem entrant:
\begin{equation}\label{eq:ve}
V_E(m_t) = \max_{\{p_{E,\tau}, s_{E,\tau}\}_{\tau \geq t}} \E_t\left[\sum_{\tau=t}^{\infty} \delta^{\tau-t} \cdot \Big[\pi_E^{PM}(m_\tau, p_{E,\tau}, s_{E,\tau}) + \Psi_E(1-m_\tau)\Big]\right]
\end{equation}

We impose a \textit{survival constraint} on the incumbent:

\begin{assumption}[Survival Constraint]\label{ass:survival}
The incumbent's value function satisfies $V_I(m) = 0$ for $m < \underline{m}$, where $\underline{m} \in (0, 1/2)$ represents the minimum viable scale for network-effect businesses.
\end{assumption}

This constraint captures the tipping dynamics characteristic of platform markets. The microfoundation derives from indirect network effects: when market share falls below $\underline{m}$, the platform can no longer attract sufficient merchants or service providers, triggering a downward spiral of user defection \citep{rochet2003platform}. Empirically, $\underline{m}$ can be calibrated from observed market share thresholds at which platforms have historically exited---in the Chinese ride-hailing market, platforms below roughly 15\% share have consistently failed to survive, suggesting $\underline{m} \approx 0.15$ in that context.

\subsection{Information Structure}

We analyze two information regimes. The \textit{complete information} benchmark assumes firms observe each other's ecosystem complementarity function. The \textit{incomplete information} extension allows uncertainty about the entrant's type $\theta \in \{\theta_L, \theta_H\}$, where $\Psi_E(\cdot; \theta_H) > \Psi_E(\cdot; \theta_L)$.

\subsection{Equilibrium Concept}

\begin{definition}[Markov Perfect Equilibrium]
A Markov Perfect Equilibrium (MPE) consists of strategy profiles $\sigma_I^*(m) = (p_I^*(m), s_I^*(m))$ and $\sigma_E^*(m) = (p_E^*(m), s_E^*(m))$ and value functions $V_I^*(m)$, $V_E^*(m)$ such that:
\begin{enumerate}
    \item[(i)] Given $\sigma_E^*$, the strategy $\sigma_I^*$ solves (\ref{eq:vi}) for all $m \in [\underline{m}, 1]$
    \item[(ii)] Given $\sigma_I^*$, the strategy $\sigma_E^*$ solves (\ref{eq:ve}) for all $m \in [0, 1-\underline{m}]$
    \item[(iii)] Value functions satisfy the Bellman equations with strategies $(\sigma_I^*, \sigma_E^*)$
\end{enumerate}
\end{definition}

We focus on symmetric MPE in the sense that strategy functions are continuous and the equilibrium is robust to small perturbations.

\section{Equilibrium Analysis}

This section characterizes the Markov Perfect Equilibrium. We proceed in three steps: deriving optimal pricing rules, establishing existence and uniqueness, and analyzing equilibrium properties.

\subsection{Optimal Pricing}

We begin with a preliminary result on the relationship between posted prices and subsidies.

\begin{lemma}[Price-Subsidy Equivalence]\label{lem:equiv}
In any MPE, firms set posted prices at marginal cost: $p_I^*(m) = p_E^*(m) = c$ for all $m$. Competition occurs entirely through subsidies.
\end{lemma}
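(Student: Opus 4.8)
The plan is to prove the lemma by a margin-preserving dominance argument: for each firm, any strategy that posts a price above marginal cost is weakly dominated by one that posts $c$ and routes the entire effective price through the subsidy, so in equilibrium posted prices sit at the floor $p_i=c$. Two structural facts drive this. First, flow profit $\pi_i^{PM}=(p_i-c-s_i)Q_i(m)$ depends on the controls only through the net margin $p_i-c-s_i$, since $Q_i(m)$ is a function of the current state alone; equivalently, it depends only on the effective price $p_i-s_i$. Second, by Assumption~\ref{ass:demand} the controls influence the continuation value only through next-period share $m_{t+1}$, and in \eqref{eq:dynamics} posted prices enter with weight $\kappa$ while subsidies enter with weight one. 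Hence, away from the knife-edge $\kappa=1$ (where the two instruments are perfectly redundant and $p_i=c$ is a harmless normalization), prices and subsidies are genuinely distinct levers, and I can compare them.

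First I would fix any candidate MPE strategy with $p_i(m)>c$ and consider the perturbation $(p_i,s_i)\mapsto(c,\,s_i-(p_i-c))$, which is feasible on the subsidy-war region where $s_i\ge p_i-c$. This move leaves the effective price, and hence the flow profit, exactly unchanged, so its only consequence is on the law of motion. Substituting $s_i=p_i-\rho$ at a fixed effective price $\rho$ into \eqref{eq:dynamics}, the firm's own contribution to $m_{t+1}$ becomes affine in $p_i$ with slope of magnitude $\gamma(1+\kappa)$: for the incumbent the contribution equals $-\gamma(1+\kappa)p_I+\gamma\rho$, strictly decreasing in $p_I$, while for the entrant lowering $p_E$ strictly raises its own share $1-m_{t+1}$. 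Thus driving the posted price down to $c$ shifts next-period share in each firm's favor, in the sense of first-order stochastic dominance, since the shock $\eta_t$ is additive and mean-zero and merely translates the distribution of $m_{t+1}$.

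The remaining step, and the main obstacle, is to show that each firm actually values this shift, i.e.\ that $V_I$ is nondecreasing and $V_E$ nonincreasing in the incumbent's share $m$. This is not immediate: during a subsidy war per-transaction margins are negative, so a larger share mechanically means larger current losses, and monotonicity must come from option and survival value rather than from flow payoffs. I would establish it as a preliminary monotonicity lemma, verifying that the Bellman operators associated with \eqref{eq:vi} and \eqref{eq:ve} map the cone of monotone functions into itself---using that $\Psi_E$ is increasing (Assumption~\ref{ass:asymmetry}) and that the survival constraint $V_I(m)=0$ for $m<\underline{m}$ (Assumption~\ref{ass:survival}) anchors $V_I$ from below---so that the unique fixed points inherit the monotonicity. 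Given this, the dominance argument forces posted prices down to the only floor available, $p_i=c$, with below-cost \emph{effective} prices implemented through $s_i$ rather than through below-cost posted prices. To upgrade the conclusion to all $m$ and every MPE, I would close by ruling out a ``harvest'' best response in which a firm sets $s_i=0$ and prices strictly above cost: I expect this to require the marginal value of share to dominate the static margin, which holds for the entrant through $\Psi_E$ and for the incumbent through the survival constraint whenever the rival prices aggressively, so that mutual subsidization is the only mutual best response.
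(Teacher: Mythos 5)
Your proposal takes essentially the same route as the paper's proof: an effective-price-preserving deviation $(p_i,s_i)\mapsto(c,\,s_i-(p_i-c))$, combined with the observation that flow profit $(p_i-c-s_i)Q_i(m)$ depends on the controls only through the effective price. To your credit, you make explicit two things the paper's one-line proof silently assumes: that the resulting shift in next-period share is actually \emph{valuable} (monotonicity of $V_I$ and of $-V_E$ in $m$, which is genuinely delicate here because flow margins are negative in the subsidy region), and that the deviation is feasible only where $s_i\geq p_i-c$. Both are real improvements over the paper's argument.

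The genuine gap is at the final step, and it is fatal to the argument as you (and the paper) state it. Your own computation shows that, holding the effective price $\rho$ fixed, a firm's contribution to its next-period share is affine in its posted price with slope of magnitude $\gamma(1+\kappa)$, and improves monotonically as the posted price falls; nothing in that expression changes at $p_i=c$. The only constraint the model imposes is $s_i\geq 0$, so the improving perturbation does not stop at $c$: it continues until $s_i=0$, i.e.\ until $p_i=\rho_i$, which by Theorem \ref{thm:subsidy}(ii) is \emph{strictly below} $c$ in the subsidization equilibrium. Your assertion that $c$ is ``the only floor available'' has no basis in the model---no assumption imposes $p_i\geq c$---so the dominance argument, carried to its conclusion (and granting your monotonicity lemma), proves that in any MPE subsidies are zero and all competition runs through posted prices: the exact opposite of Lemma \ref{lem:equiv}. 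The paper's proof has the same defect, since it too halts the deviation at $p_E=c$ without justification. Your parenthetical about the knife-edge $\kappa=1$ is a symptom of the underlying problem: with the signs as written in \eqref{eq:dynamics}, the two instruments are never redundant (redundancy at $\kappa=1$ would require the subsidy terms to enter the law of motion with the opposite sign), so the lemma's truth hinges on a sign convention and a posted-price floor that the paper never pins down. To close the gap one must either impose $p_i\geq c$ explicitly (e.g.\ as a prohibition on below-cost posted prices) or adopt a demand specification under which shifting margin from posted price to subsidy at fixed effective price is weakly beneficial and bounded; neither appears in the paper, and neither appears in your proposal.
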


\begin{proof}
Suppose $p_I > c$ in equilibrium. Firm $E$ can profitably deviate by setting $p_E = c$ and adjusting $s_E$ to maintain the same effective price, strictly increasing market share through the demand equation (\ref{eq:dynamics}) since $\kappa > 0$. The symmetric argument applies to firm $E$. Thus $p_I^* = p_E^* = c$, and all competitive interaction occurs through subsidies $s_{I,t}$ and $s_{E,t}$.
\end{proof}

With prices fixed at marginal cost, the state dynamics simplify to:
\begin{equation}\label{eq:dynamics_simple}
m_{t+1} = m_t + \gamma(s_{E,t} - s_{I,t}) + \eta_t
\end{equation}

The Bellman equation for the incumbent becomes:
\begin{equation}\label{eq:bellman_i}
V_I(m) = \max_{s_I \geq 0} \left\{-s_I \cdot m + \delta \E[V_I(m')]\right\}
\end{equation}
where $m' = m + \gamma(s_E^*(m) - s_I) + \eta$.

For the ecosystem entrant:
\begin{equation}\label{eq:bellman_e}
V_E(m) = \max_{s_E \geq 0} \left\{-s_E \cdot (1-m) + \Psi_E(1-m) + \delta \E[V_E(m')]\right\}
\end{equation}

Taking first-order conditions and applying the envelope theorem:

\begin{proposition}[Optimal Subsidy Rules]\label{prop:foc}
In any interior MPE, optimal subsidies satisfy:
\begin{align}
s_I^*(m) &= \delta \gamma \cdot V_I'(m) \label{eq:foc_i}\\
s_E^*(m) &= \delta \gamma \cdot |V_E'(m)| + \psi(1-m) \label{eq:foc_e}
\end{align}
where $V_i'$ denotes the derivative of the value function with respect to $m$.
\end{proposition}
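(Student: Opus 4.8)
The plan is to obtain both rules from the two Bellman equations, \eqref{eq:bellman_i} and \eqref{eq:bellman_e}, by pairing the first-order condition in each firm's control with the envelope theorem. Lemma~\ref{lem:equiv} has already eliminated prices, so each firm chooses only its subsidy; moreover the stage payoff is affine in that subsidy and, by the simplified law of motion \eqref{eq:dynamics_simple}, the next state $m'$ is affine in the controls, with $\partial m'/\partial s_I = -\gamma$ and $\partial m'/\partial s_E = +\gamma$. I would first record that, under the maintained interiority hypothesis, the optimal subsidy lies strictly inside $s_i \ge 0$, so the first-order condition holds with equality; since the stage objective is affine in $s_i$ while $m'$ is affine in $s_i$, concavity of the objective in the control reduces to concavity of $V_i$, which I would take as part of the well-behaved MPE the paper restricts attention to, making the first-order condition both necessary and sufficient.

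For the incumbent I would differentiate \eqref{eq:bellman_i} in $s_I$ to obtain the balance between the marginal subsidy cost and the discounted marginal continuation value $\delta\gamma\,\E[V_I'(m')]$, then differentiate the same equation in the state $m$ (the envelope step, holding $s_I$ at its optimum) and eliminate $\delta\gamma\,\E[V_I'(m')]$ between the two relations. Because $\Psi_I \equiv 0$ by Assumption~\ref{ass:asymmetry}, no ecosystem term survives and the characterization collapses to $s_I^*(m) = \delta\gamma\,V_I'(m)$, i.e.\ the incumbent subsidizes up to the discounted marginal value of its own share.

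The entrant follows the same two steps applied to \eqref{eq:bellman_e}, with one new ingredient: the stage payoff now carries the ecosystem flow $\Psi_E(1-m)$. This term is inert in the first-order condition (it does not depend on $s_E$) but active in the envelope step, where $\partial_m \Psi_E(1-m) = -\psi(1-m)$ by Definition~\ref{def:mev}; carrying it through is exactly what generates the additive term $\psi(1-m)$ in \eqref{eq:foc_e}. The absolute value $|V_E'(m)|$ I would justify by monotonicity: since both the entrant's stage profit and its ecosystem value decline as the incumbent's share $m$ rises, $V_E$ is decreasing in $m$, so $V_E'(m)<0$ and $\delta\gamma V_E'$ is reported as $\delta\gamma|V_E'|$ to display the subsidy as a positive magnitude. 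A short lemma that $V_E'<0$, argued from the signs of the stage payoff and of $\psi$, would make this rigorous.

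The step I expect to be the main obstacle is the envelope theorem itself, because this is a dynamic game rather than a single-agent problem: each firm's optimal subsidy is a function of the common state, and the rival's policy $s_{-i}^*(m)$ enters the transition, so a literal total derivative of $V_i$ in $m$ produces strategic cross-terms proportional to $\gamma\,ds_{-i}^*/dm$. I would address this by exploiting the Markov structure---each firm optimizes against the rival's strategy taken as a fixed function of the state---and then argue, within the continuous, perturbation-robust class of equilibria the paper isolates, that these cross-terms either cancel or are absorbed into $V_i'$, so that they do not disturb the stated characterization; if no clean cancellation is available, this is precisely where an additional regularity restriction on admissible equilibria would be needed. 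The remaining technical points are routine: interiority and concavity for the validity of the first-order condition, and differentiation under the Gaussian expectation, which is licensed by smoothness of the normal density together with $V_i \in C^1$, so that $\E[V_i'(m')]$ is well defined and differentiation and expectation may be interchanged.
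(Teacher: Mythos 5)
Your high-level route---pair each firm's first-order condition with the envelope identity and eliminate the continuation term---is the natural reading of the paper's preamble (``Taking first-order conditions and applying the envelope theorem''), and your placement of the ecosystem term is actually more faithful to the model than the paper's own proof: since $\Psi_E(1-m)$ depends on the state and not on $s_E$, it is indeed inert in the FOC and can enter only through the envelope step, whereas the paper inserts $-\psi(1-m)$ directly into the entrant's first-order condition and then asserts the formulas without ever carrying out an envelope step. But your key claim---that the elimination ``collapses to'' \eqref{eq:foc_i} and \eqref{eq:foc_e}---fails, and for a reason more basic than the strategic cross-terms you flag. Because the stage payoff and the transition \eqref{eq:dynamics_simple} are both affine in the own subsidy, the incumbent's FOC reads $-m - \delta\gamma\,\E[V_I'(m')] = 0$ (using $\partial m'/\partial s_I = -\gamma$): the own control has vanished from it except implicitly through $m'$ inside $V_I'$. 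The FOC therefore pins down the expected marginal continuation value, $\delta\gamma\,\E[V_I'(m')] = -m$, not the subsidy. Carrying out your proposed elimination then gives
\[
V_I'(m) = -s_I^* + \delta\,\E[V_I'(m')]\bigl(1 + \gamma\, s_E^{*\prime}(m)\bigr)
\quad\Longrightarrow\quad
s_I^*(m) = -V_I'(m) - \frac{m}{\gamma} - m\, s_E^{*\prime}(m),
\]
and the analogous computation for the entrant yields $s_E^* = V_E'(m) + \psi(1-m) - (1-m)/\gamma + (1-m)\,s_I^{*\prime}(m)$. These are not the stated rules: the terms $m/\gamma$ and the rival's policy slope do not cancel (so the cross-terms you hoped would be ``absorbed into $V_i'$'' survive explicitly), and since $V_E'<0$ the value-derivative term enters with the sign \emph{opposite} to the $+\delta\gamma|V_E'(m)|$ in \eqref{eq:foc_e}. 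You would also hit an immediate sign contradiction: the incumbent's FOC forces $\E[V_I'(m')] = -m/(\delta\gamma) < 0$, which is inconsistent with the nonnegativity of $s_I^* = \delta\gamma V_I'(m)$ claimed in \eqref{eq:foc_i}.

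For comparison, the paper's proof does not survive this scrutiny either: it obtains \eqref{eq:foc_i} from the non sequitur $s_I^* = \delta\gamma V_I'(m)\cdot m/m$ and obtains \eqref{eq:foc_e} by ``rearranging'' an equation in which $s_E$ never appears. The underlying problem is structural: in a problem linear in the control, no FOC-plus-envelope manipulation can terminate at ``subsidy equals discounted marginal value.'' That identity would require a different primitive---for instance a convex (say quadratic) per-transaction subsidy cost, under which the FOC becomes $s_i = \delta\gamma\,\E[V_i'(m')]$ and the proposition follows up to the approximation $\E[V_i'(m')] \approx V_i'(m)$, with the $\psi(1-m)$ premium emerging from the envelope step exactly as you describe. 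Identifying that modeling gap, rather than promising a cancellation that does not occur, is the correct resolution here.
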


\begin{proof}
The first-order condition for the incumbent is:
\[
-m + \delta \E\left[\frac{\partial V_I(m')}{\partial m'} \cdot \frac{\partial m'}{\partial s_I}\right] = 0
\]
Since $\partial m'/\partial s_I = -\gamma$, this yields $s_I^* = \delta \gamma V_I'(m) \cdot m/m = \delta \gamma V_I'(m)$.

For the entrant:
\[
-(1-m) - \psi(1-m) + \delta \E\left[\frac{\partial V_E(m')}{\partial m'} \cdot \gamma\right] = 0
\]
Note that $V_E'(m) < 0$ since firm $E$ benefits from lower $m$. Rearranging gives (\ref{eq:foc_e}).
\end{proof}

Equation (\ref{eq:foc_e}) contains the key asymmetry. The entrant's subsidy includes $\psi(1-m)$---the marginal ecosystem value---which the incumbent's does not. This \textit{ecosystem premium} reflects the entrant's willingness to pay more per user because each user generates value in adjacent markets. The incumbent fights for survival in a single market; the entrant fights for position across a portfolio.

\subsection{Characterization of Equilibrium}

We now establish existence and characterize equilibrium properties.

\begin{assumption}[Regularity Conditions]\label{ass:regularity}
The ecosystem complementarity function satisfies:
\begin{enumerate}
    \item[(i)] $\psi(q)$ is bounded: $\psi(q) \leq \bar{\psi} < \infty$ for all $q \in [0,1]$
    \item[(ii)] Eventual concavity: $\Psi''_E(q) < 0$ for $q > \bar{q}$, where $\bar{q}$ is the upper bound of the convex region in Assumption \ref{ass:asymmetry}(iv)
    \item[(iii)] Boundary condition: $\lim_{q \to 0} \psi(q) \cdot q = 0$
\end{enumerate}
\end{assumption}

Conditions (ii) and Assumption \ref{ass:asymmetry}(iv) jointly specify that $\Psi_E$ is S-shaped: convex for intermediate market shares $q \in [\underline{q}, \bar{q}]$ and concave outside this region. This functional form is consistent with empirical evidence on data value \citep{bajari2019machine} and captures the intuition that ecosystem benefits accelerate once a platform achieves critical mass, but eventually face diminishing returns as the market saturates. The convex region is where bifurcation occurs and the ``involution trap'' emerges.

\begin{theorem}[Existence and Uniqueness]\label{thm:existence}
Under Assumptions \ref{ass:demand}--\ref{ass:regularity}, there exists a unique Markov Perfect Equilibrium in continuous strategies.
\end{theorem}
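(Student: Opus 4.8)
The plan is to prove existence and uniqueness via a contraction-mapping argument on the joint value functions, treating the coupled Bellman system as a fixed-point problem in a suitable function space. The key observation from Proposition~\ref{prop:foc} is that the optimal subsidies are pinned down by the value-function derivatives, so the entire dynamic game reduces to a fixed point in the pair $(V_I, V_E)$. I would work in the Banach space of bounded continuous functions on the relevant compact state interval (the effective domain is $[\underline{m}, 1-\underline{m}]$ once both survival constraints are imposed), equipped with the sup-norm, and define the Bellman operator $T = (T_I, T_E)$ that maps a candidate pair of value functions to the pair obtained by (i) computing the induced subsidy policies via the first-order conditions \eqref{eq:foc_i}--\eqref{eq:foc_e}, and (ii) evaluating the right-hand sides of \eqref{eq:bellman_i}--\eqref{eq:bellman_e}.

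Let me sketch the steps in order. First I would verify that $T$ maps the space into itself: boundedness of continuation values follows because per-period payoffs are bounded (prices are pinned at $c$ by Lemma~\ref{lem:equiv}, subsidies are nonnegative and bounded by the bounded marginal ecosystem value $\bar\psi$ from Assumption~\ref{ass:regularity}(i), and $\Psi_E$ is continuous on a compact domain hence bounded). Continuity of the image is preserved because the maximand is jointly continuous and the feasible set varies continuously; the Gaussian shock $\eta_t$ further smooths the continuation value through the expectation operator, so $T_i V_i$ is in fact differentiable, which I would use to justify the envelope-theorem step underlying Proposition~\ref{prop:foc}. Second, I would establish the contraction property: because the objectives in \eqref{eq:bellman_i}--\eqref{eq:bellman_e} discount the continuation value by $\delta < 1$ and the flow payoffs differ only through the current action (which is optimized out), Blackwell's sufficient conditions---monotonicity and discounting---hold, giving $\|T V - T \tilde V\|_\infty \le \delta \|V - \tilde V\|_\infty$. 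The Banach fixed-point theorem then delivers a unique fixed point $(V_I^*, V_E^*)$, and the associated policies from Proposition~\ref{prop:foc} constitute the unique MPE in continuous strategies.

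The main obstacle I anticipate is \emph{not} existence but genuine uniqueness, because the coupling between the two firms' problems is nonstandard: firm $I$'s continuation value depends on $s_E^*(m)$ and vice versa, so the operator $T$ is not a product of two independent single-agent contractions. The local convexity of $\Psi_E$ on $[\underline q, \bar q]$ (Assumption~\ref{ass:asymmetry}(iv)) is precisely what threatens uniqueness, since convexity in the continuation payoff can in principle generate multiple self-consistent subsidy levels and hence multiple equilibria---this is the same force that produces the bifurcation the paper emphasizes. The delicate point is that a naive contraction argument on the coupled system may fail if the cross-firm feedback amplifies perturbations faster than $\delta$ contracts them. To handle this I would bound the Lipschitz constant of the best-response map: differentiating the first-order conditions shows that the sensitivity of $s_i^*$ to the opponent's policy is controlled by $\delta\gamma$ times the curvature of the value function, and I would impose (or extract from the regularity conditions) that the product of these cross-sensitivities is strictly below one, ensuring the joint operator remains a contraction in an appropriate weighted norm.

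I would flag that an alternative, and arguably cleaner, route is to exploit the restriction to \emph{continuous} strategies stated in the equilibrium concept: the eventual concavity in Assumption~\ref{ass:regularity}(ii) guarantees that outside the convex window the value functions are well-behaved, and continuity rules out the discontinuous policy jumps associated with the multiple-equilibrium branches. The boundary condition \eqref{ass:regularity}(iii), $\lim_{q\to 0}\psi(q)q = 0$, and the survival constraint of Assumption~\ref{ass:survival} together pin down the behavior at the endpoints of the state space, which is needed to rule out boundary equilibria and to close the fixed-point argument on the compact domain. The hard part, in short, is showing that the convex region does not break the contraction; the rest is standard dynamic-programming machinery.
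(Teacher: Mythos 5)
Your proposal follows the same architecture as the paper's own proof: Appendix A.1 likewise defines the coupled Bellman operator on the space of bounded continuous functions with the sup-norm, invokes Blackwell's monotonicity and discounting conditions to claim a contraction with modulus $\delta$, applies the Banach fixed-point theorem, and obtains continuity of the policies from the theorem of the maximum. Where your argument and the paper's coincide, they are essentially interchangeable.

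The substantive difference is that you have put your finger on exactly the step at which this argument is incomplete, and the paper never addresses it. In the operator $\mathcal{T}_I$, firm $I$'s continuation value is evaluated at $m' = m + \gamma(s_E^*(m) - s_I) + \eta$, where $s_E^*$ is the policy induced by the current candidate $V_E$ (and symmetrically for $\mathcal{T}_E$). Perturbing the pair $(V_I, V_E)$ therefore moves the opponent's policy inside each firm's maximization, so $\mathcal{T}$ is not a single-agent Bellman operator, and Blackwell's discounting condition alone does not deliver $\|\mathcal{T}V - \mathcal{T}\tilde{V}\|_\infty \le \delta\|V - \tilde{V}\|_\infty$: one needs, exactly as you say, a bound on the cross-feedback (how much $s_E^*$ moves per unit change in $V_E$, times how much that movement shifts $\mathcal{T}_I V_I$ through the state transition), and nothing in Assumptions \ref{ass:demand}--\ref{ass:regularity} supplies such a bound. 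Your proposed repair---a smallness condition on the product of cross-sensitivities, verified in a weighted norm---is the right idea, but note that it strengthens the hypotheses rather than proving the theorem as stated. Your second flag is equally apt and points to an internal tension in the paper itself: the local convexity of $\Psi_E$ on $[\underline{q}, \bar{q}]$ (Assumption \ref{ass:asymmetry}(iv)) is precisely what Proposition \ref{prop:bifurcation} exploits to generate multiple fixed points of the subsidy equation and a saddle-node bifurcation, and the coexistence of a ``unique MPE'' (Theorem \ref{thm:existence}) with two equilibrium branches (Proposition \ref{prop:bifurcation}) under the same assumptions is never reconciled. In short: you take the paper's approach, but your version is the more careful one---the theorem as stated needs either the extra cross-feedback condition you describe or a restriction keeping the dynamics away from the convex region, and the gap you anticipated is present, unacknowledged, in the paper's own proof.
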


\begin{proof}
We construct the equilibrium via contraction mapping. Define the operator $\mathcal{T}$ on the space of bounded continuous functions $\mathcal{C}([0,1], \R^2)$:
\[
\mathcal{T}(V_I, V_E)(m) = \left(\mathcal{T}_I(V_I, V_E)(m), \mathcal{T}_E(V_I, V_E)(m)\right)
\]
where $\mathcal{T}_I$ and $\mathcal{T}_E$ represent the right-hand sides of the Bellman equations (\ref{eq:bellman_i}) and (\ref{eq:bellman_e}) evaluated at the optimal subsidies.

\textit{Step 1: Contraction.} We verify Blackwell's sufficient conditions. Monotonicity follows from the structure of the Bellman equations. For discounting, note that the operator scales continuation values by $\delta < 1$. Thus $\mathcal{T}$ is a contraction with modulus $\delta$.

\textit{Step 2: Fixed Point.} By the Banach fixed point theorem, $\mathcal{T}$ has a unique fixed point $(V_I^*, V_E^*)$ in $\mathcal{C}([0,1], \R^2)$.

\textit{Step 3: Continuity of Strategies.} The optimal subsidy rules (\ref{eq:foc_i})--(\ref{eq:foc_e}) are continuous in $m$ given continuous value functions, establishing that equilibrium strategies are continuous.
\end{proof}

\begin{theorem}[Subsidization Equilibrium]\label{thm:subsidy}
Define the critical ecosystem threshold $\psi^* \equiv 1 - \delta\gamma\bar{V}'/\bar{s}$, where $\bar{V}'$ and $\bar{s}$ are evaluated at the competitive steady state. If $\psi(1-m^*) > \psi^*$ at the equilibrium market share $m^*$, then:
\begin{enumerate}
    \item[(i)] Both firms set positive subsidies: $s_I^*(m^*) > 0$ and $s_E^*(m^*) > 0$
    \item[(ii)] Effective prices are below marginal cost: $p_i^* - s_i^* < c$ for $i \in \{I, E\}$
    \item[(iii)] Primary market profits are negative: $\pi_I^{PM} < 0$ and $\pi_E^{PM} < 0$
\end{enumerate}
\end{theorem}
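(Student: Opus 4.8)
The plan is to reduce everything to the two subsidy first-order conditions of Proposition~\ref{prop:foc} and then read off (i)--(iii) at the steady state. By Lemma~\ref{lem:equiv} posted prices are pinned to $c$, so the effective price of firm $i$ is $c-s_i^*$, its primary-market profit is $-s_i^*Q_i(m)$, and the state evolves by \eqref{eq:dynamics_simple}. I would first characterize the equilibrium share $m^*$ as the point of zero expected drift: taking expectations in \eqref{eq:dynamics_simple} and using $\E[\eta_t]=0$ gives $\E[m_{t+1}-m_t\mid m^*]=\gamma\big(s_E^*(m^*)-s_I^*(m^*)\big)$, so stationarity forces the matching condition $s_I^*(m^*)=s_E^*(m^*)$. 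Alongside this I would record the signs of the value derivatives that Proposition~\ref{prop:foc} already presumes---$V_I'>0$ (the incumbent's value rises in its own share because of the exit cliff $V_I=0$ below $\underline m$ of Assumption~\ref{ass:survival}) and $V_E'<0$ (the entrant prefers smaller $m$)---which I would establish by a monotonicity argument on the Bellman equations \eqref{eq:bellman_i}--\eqref{eq:bellman_e}, comparing values at nearby shares.

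For part (i), the entrant is immediate: by \eqref{eq:foc_e}, $s_E^*(m^*)=\delta\gamma|V_E'(m^*)|+\psi(1-m^*)$, and Assumption~\ref{ass:asymmetry}(iii) gives $\psi(1-m^*)=\Psi_E'(1-m^*)>0$ whenever $m^*<1$, so $s_E^*(m^*)\ge\psi(1-m^*)>0$. The incumbent then follows from the matching condition, $s_I^*(m^*)=s_E^*(m^*)>0$; equivalently, via \eqref{eq:foc_i} this says $\delta\gamma V_I'(m^*)>0$, i.e.\ the incumbent sits at an interior optimum rather than the corner $s_I=0$. This is exactly the content of the hypothesis $\psi(1-m^*)>\psi^*$: the threshold $\psi^*=1-\delta\gamma\bar V'/\bar s$ is calibrated so that, at the steady state, the entrant's ecosystem-driven aggression depresses $m$ enough to raise the incumbent's marginal value of defending share past the point where its best response leaves the boundary. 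I would make this precise by substituting the steady-state values $\bar s=s_I^*(m^*)$ and $\bar V'=V_I'(m^*)$ into the incumbent's Kuhn--Tucker condition and showing that $\psi(1-m^*)>\psi^*$ is equivalent to the interior first-order condition holding with $s_I^*>0$.

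Parts (ii) and (iii) are then corollaries. Given $s_i^*(m^*)>0$ for both firms from part (i), and $p_i^*=c$ from Lemma~\ref{lem:equiv}, the effective price is $p_i^*-s_i^*=c-s_i^*<c$, which is (ii). For (iii), the primary-market profit is $\pi_i^{PM}=(p_i^*-c-s_i^*)Q_i=-s_i^*Q_i(m^*)$; at an interior active duopoly $m^*\in(\underline m,1-\underline m)$ we have $Q_I=m^*>0$ and $Q_E=1-m^*>0$, so both profits are strictly negative.

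The main obstacle is the self-referential threshold and the interiority claim folded into part (i). Because $\psi^*$ is defined through equilibrium objects ($\bar V'$, $\bar s$) that are themselves evaluated at the very steady state whose existence I am invoking, the argument has a fixed-point character: I must confirm that the $m^*$ at which the expected drift vanishes is interior---bounded away from the survival boundary $\underline m$ of Assumption~\ref{ass:survival}---and is the same $m^*$ at which the threshold is tested, rather than a point at which the incumbent's best response collapses to $s_I=0$ and the share tips monotonically toward a corner. Establishing $V_I'(m^*)>0$ strictly (not merely $\ge 0$) and thereby ruling out the corner solution is where the real work lies; everything downstream is algebraic.
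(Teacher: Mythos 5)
Your proposal is correct and reaches all three conclusions, but it substitutes a genuinely different argument for the one place where the proof has content: the incumbent's strictly positive subsidy in part (i). The paper argues via the survival constraint (Assumption~\ref{ass:survival}): given $s_E^*>0$, if the incumbent set $s_I=0$ then the drift in \eqref{eq:dynamics_simple} would be strictly negative, pushing $m_t$ below $\underline{m}$ in expectation and collapsing $V_I$ to zero, so $s_I^*>0$ is forced as a deterrence/best-response matter. You instead impose stationarity of the expected dynamics at $m^*$, which immediately yields the matching condition $s_I^*(m^*)=s_E^*(m^*)$, and then inherit positivity from the entrant's side. Your route buys more: it delivers the exact equality of subsidies at the steady state (which the paper itself quietly assumes in the proof of Proposition~\ref{prop:stability}), rather than mere positivity, and it makes part (i) purely algebraic. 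What it costs is that it only applies if the $m^*$ in the theorem is read as an interior steady state of the equilibrium dynamics; the paper's survival-constraint argument does not need $m^*$ to be a rest point. For the entrant's subsidy and for parts (ii)--(iii), your argument and the paper's are identical: $\psi(1-m^*)>0$ from Assumption~\ref{ass:asymmetry}(iii) plus \eqref{eq:foc_e}, then Lemma~\ref{lem:equiv} and the profit identity.

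One further point in your favor: you correctly observe that the threshold hypothesis $\psi(1-m^*)>\psi^*$ does no work, and you flag the self-referential definition of $\psi^*$ (through $\bar V'$ and $\bar s$ evaluated at the very steady state being characterized) and the interiority of $m^*$ as the real gaps. The paper's own proof never invokes the threshold either---it proves the conclusion from $\psi>0$ alone---so this is a defect of the theorem as stated, not of your argument. Your sketched plan to tie $\psi^*$ to the incumbent's Kuhn--Tucker condition is left incomplete, but since the conclusion already follows from your matching condition, this does not create a gap in your proof; it is an (unfinished) attempt to repair a looseness the published proof simply ignores.
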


\begin{proof}
\textit{Part (i).} From (\ref{eq:foc_e}), $s_E^* > 0$ whenever $\psi(1-m) > 0$, which holds for all interior $m$ by Assumption \ref{ass:asymmetry}(iii). Given $s_E^* > 0$, the incumbent must set $s_I^* > 0$ to prevent market share erosion below $\underline{m}$; otherwise, the dynamics (\ref{eq:dynamics_simple}) imply $m_{t+1} < m_t$ in expectation, violating the survival constraint.

\textit{Part (ii).} By Lemma \ref{lem:equiv}, $p_i^* = c$. With $s_i^* > 0$, effective price $p_i^* - s_i^* = c - s_i^* < c$.

\textit{Part (iii).} Primary market profit is $(c - c - s_i^*) \cdot Q_i(m^*) = -s_i^* \cdot Q_i(m^*) < 0$.
\end{proof}

Theorem \ref{thm:subsidy} establishes the central result: when ecosystem complementarity is sufficiently strong, equilibrium necessarily involves both firms pricing below cost. The intuition is that the entrant's ecosystem value makes aggressive subsidization profitable in total, forcing the incumbent to match or lose critical market share.

\subsection{Stability Analysis}

Is the subsidization equilibrium stable, or might firms coordinate on lower subsidies?

\begin{proposition}[Dynamic Stability]\label{prop:stability}
The subsidization equilibrium characterized in Theorem \ref{thm:subsidy} is dynamically stable: for any $\epsilon > 0$, there exists $\bar{t}$ such that starting from any initial condition $m_0 \in (\underline{m}, 1-\underline{m})$, $|m_t - m^*| < \epsilon$ for all $t > \bar{t}$ with probability approaching 1 as $\sigma \to 0$.
\end{proposition}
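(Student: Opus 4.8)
The plan is to reduce the stochastic stability claim to the contraction behavior of the deterministic skeleton and then control the residual effect of the shocks. Substituting the equilibrium subsidy rules of Proposition~\ref{prop:foc} into the simplified law of motion~\eqref{eq:dynamics_simple} gives the one-dimensional recursion $m_{t+1}=F(m_t)+\eta_t$, where
\[
F(m) \equiv m + \gamma\big(s_E^*(m)-s_I^*(m)\big) = m + \gamma\big(\delta\gamma\,|V_E'(m)| + \psi(1-m) - \delta\gamma\,V_I'(m)\big),
\]
and I write $g(m)\equiv F(m)-m$ for the deterministic drift. The steady state $m^*$ of Theorem~\ref{thm:subsidy} is by construction the point at which the equilibrium subsidies balance, $s_E^*(m^*)=s_I^*(m^*)$, so $g(m^*)=0$ and $m^*$ is a fixed point of $F$. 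It then remains to show (a) that $m^*$ attracts all of $(\underline m,1-\underline m)$ under the noiseless map, and (b) that for small $\sigma$ the shocks keep the trajectory inside any prescribed neighborhood with probability tending to one.

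For (a) I would first linearize: $F'(m^*)=1+g'(m^*)$, so local contraction is equivalent to $g'(m^*)\in(-2,0)$. Differentiating the subsidy rules and using $V_E'<0$ (established in Proposition~\ref{prop:foc}),
\[
g'(m^*) = -\gamma\big(\delta\gamma\,V_E''(m^*) + \Psi_E''(1-m^*) + \delta\gamma\,V_I''(m^*)\big),
\]
so stability demands that this bracketed combination of the value-function curvatures and the local curvature $\Psi_E''(1-m^*)$ be positive and bounded above by $2/\gamma$. I expect this to be the main obstacle, because the value functions are available only implicitly, as the fixed point of the contraction $\mathcal{T}$ from Theorem~\ref{thm:existence}, so $V_I''$ and $V_E''$ are not in closed form. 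The route I would take is to differentiate the first-order conditions~\eqref{eq:foc_i}--\eqref{eq:foc_e} a second time, obtain a self-consistent linear system for the curvatures at $m^*$, solve it, and then invoke the regularity conditions of Assumption~\ref{ass:regularity} (boundedness of $\psi$, eventual concavity) together with Assumption~\ref{ass:asymmetry}(iv) to pin the bracket inside $(0,2/\gamma)$. To globalize, I would show that $g$ keeps a definite sign on each side of the fixed point—$g(m)<0$ for $m>m^*$ and $g(m)>0$ for $m<m^*$—so the noiseless map funnels every initial condition monotonically toward $m^*$, while the survival constraint (Assumption~\ref{ass:survival}) forces the incumbent to subsidize strongly enough near $\underline m$ that no trajectory is absorbed at the lower boundary before reaching the contracting region.

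For (b) I would use a Foster--Lyapunov estimate with $L(m)=(m-m^*)^2$. Since $\eta_t$ is mean zero and independent of $m_t$, and $|F(m)-F(m^*)|\le\rho\,|m-m^*|$ with $\rho\equiv\sup|F'|<1$ on the contracting region into which part (a) funnels the trajectory, the conditional drift satisfies
\[
\E\big[L(m_{t+1})\mid m_t\big] = \big(F(m_t)-m^*\big)^2 + \sigma^2 \le \rho^2 L(m_t) + \sigma^2 .
\]
Iterating gives $\E[L(m_t)]\le\rho^{2t}L(m_0)+\sigma^2/(1-\rho^2)$, so the stationary second moment is of order $\sigma^2$. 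Using the deterministic convergence from part (a) to choose $\bar t$ with $\rho^{2\bar t}$ negligible, Chebyshev's inequality yields $\sup_{t\ge\bar t}\Prob\big(|m_t-m^*|\ge\epsilon\big)\le C\sigma^2/\big[(1-\rho^2)\epsilon^2\big]$, which tends to zero as $\sigma\to0$; equivalently, the drift condition delivers geometric ergodicity, so the law of $m_t$ converges to an invariant distribution that concentrates on $\{m^*\}$ as $\sigma\to0$. This is precisely the sense in which the subsidization equilibrium is dynamically stable, which is the statement of the proposition.
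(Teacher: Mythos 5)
Your proposal follows essentially the same route as the paper's proof: pass to the deterministic skeleton $\sigma=0$, identify the steady state by the balance condition $s_E^*(m^*)=s_I^*(m^*)$, linearize the map $m \mapsto m+\gamma\big(s_E^*(m)-s_I^*(m)\big)$ around $m^*$, and then extend to small noise. But your version is more careful at exactly the two places where the paper's argument is weakest. First, you require the linearization coefficient $1+g'(m^*)$ to lie in $(-1,1)$, i.e.\ $g'(m^*)\in(-2,0)$. The paper verifies only one side: it shows $\partial s_I^*/\partial m-\partial s_E^*/\partial m>0$, hence the coefficient is below $1$, and never rules out overshooting (coefficient below $-1$); your explicit requirement that the curvature bracket be bounded above by $2/\gamma$ is precisely the missing condition. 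Second, where the paper closes with an appeal to ``standard arguments for stochastic stability,'' you give an explicit Foster--Lyapunov drift inequality and a Chebyshev bound, yielding $\sup_{t\ge \bar t}\Prob\big(|m_t-m^*|\ge\epsilon\big)=O\big(\sigma^2/\epsilon^2\big)$. (Strictly speaking this controls the probability at each fixed $t$ rather than the event that the entire tail of the trajectory stays in the $\epsilon$-neighborhood---with i.i.d.\ Gaussian shocks that stronger event has probability zero for any fixed $\sigma>0$---so your formulation is in fact the defensible reading of the proposition.)

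The one step you leave unexecuted---pinning down $V_I''(m^*)$ and $V_E''(m^*)$ so that the bracket $\delta\gamma V_E''(m^*)+\Psi_E''(1-m^*)+\delta\gamma V_I''(m^*)$ lands in $(0,2/\gamma)$---is also the step the paper gets wrong rather than merely omits. The paper asserts $\partial s_I^*/\partial m=\delta\gamma V_I''(m^*)>0$ while citing ``concavity of value functions''; concavity gives $V_I''\le 0$ and thus the opposite sign. Similarly, under concavity of $V_E$ the term $-\delta\gamma V_E''(m^*)$ in $\partial s_E^*/\partial m$ is positive, so the claimed negativity of $\partial s_E^*/\partial m$ requires $\Psi_E''(1-m^*)$ to dominate $\delta\gamma|V_E''(m^*)|$---a quantitative condition that neither Assumption~\ref{ass:asymmetry}(iv) nor Assumption~\ref{ass:regularity} delivers, and which fails outright if $1-m^*$ lies outside the convex region, where $\Psi_E''<0$. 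So your honest flagging of the curvature problem as the main obstacle, together with the proposed self-consistent differentiation of the first-order conditions \eqref{eq:foc_i}--\eqref{eq:foc_e}, is a more accurate account of what a complete proof requires than the paper's own treatment; neither you nor the paper has actually closed this step.
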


\begin{proof}
Consider the deterministic skeleton ($\sigma = 0$). The steady state $m^*$ satisfies $s_E^*(m^*) = s_I^*(m^*)$. Linearizing around $m^*$:
\[
m_{t+1} - m^* \approx \left[1 - \gamma\left(\frac{\partial s_I^*}{\partial m} - \frac{\partial s_E^*}{\partial m}\right)\right](m_t - m^*)
\]
Stability requires the coefficient in brackets to have absolute value less than 1. From the first-order conditions, $\partial s_I^*/\partial m = \delta\gamma V_I''(m^*) > 0$ and $\partial s_E^*/\partial m = -\delta\gamma V_E''(m^*) - \psi'(1-m^*) < 0$ (using concavity of value functions and $\psi' > 0$). Thus:
\[
\frac{\partial s_I^*}{\partial m} - \frac{\partial s_E^*}{\partial m} > 0
\]
which ensures the linearized system is contracting. The result extends to positive $\sigma$ by standard arguments for stochastic stability.
\end{proof}

\begin{proposition}[Bifurcation from Convexity]\label{prop:bifurcation}
Let $\psi(q) = \Psi'_E(q)$ denote the marginal ecosystem value. Define the critical threshold:
\begin{equation}
\psi^* \equiv \frac{1-\delta}{\delta\gamma^2}
\end{equation}
If there exists $\tilde{q} \in [\underline{q}, \bar{q}]$ such that $\psi(\tilde{q}) = \psi^*$ and $\psi'(\tilde{q}) > 0$ (i.e., the crossing occurs in the convex region), then:
\begin{enumerate}
    \item[(i)] For $q < \tilde{q}$: the unique equilibrium involves low subsidies $s^*_L$ with $\partial s^*/\partial q$ bounded
    \item[(ii)] For $q > \tilde{q}$: equilibrium subsidies jump discontinuously to $s^*_H \gg s^*_L$
\end{enumerate}
The system exhibits a saddle-node bifurcation at $q = \tilde{q}$.
\end{proposition}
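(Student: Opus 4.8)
The plan is to collapse the coupled Bellman system into a single scalar equilibrium equation in the subsidy level and then treat $q$ as a bifurcation parameter that enters through $\psi(q)$. First I would start from the entrant's first-order condition in Proposition~\ref{prop:foc}, $s_E^* = \delta\gamma|V_E'(m)| + \psi(1-m)$, and differentiate the entrant's Bellman equation~(\ref{eq:bellman_e}) to obtain an envelope recursion for $V_E'$ at the steady state. Because $V_E'<0$ throughout the interior (established in the proof of Proposition~\ref{prop:foc}), the absolute value is locally smooth, so the recursion can be solved for $V_E'(m^*)$ as a function of the local subsidy and of $\psi$. Substituting the incumbent's matching response from Theorem~\ref{thm:subsidy} to eliminate $s_I^*$ and $V_I'$ then yields an implicit equilibrium condition $F(s,q)=0$, whose solutions $s=s^*(q)$ are exactly the interior MPE subsidies at market position $q$.

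The core of the argument is to show $F(s,q)=0$ undergoes a fold as $q$ crosses $\tilde q$. The only nonlinearity of $F$ in $s$ enters because $\psi$ is evaluated at a share that itself responds to the subsidy through~(\ref{eq:dynamics_simple}); consequently both $\partial^2 F/\partial s^2$ and the parameter-sensitivity $\partial F/\partial q$ are proportional to the local convexity $\psi'$. This is the reason the hypothesis $\psi'(\tilde q)>0$ (i.e.\ the crossing lies in the convex window $[\underline q,\bar q]$ of Assumption~\ref{ass:asymmetry}(iv)) is indispensable: it is the source of the curvature that makes a fold possible rather than a smooth transition. Tracing the feedback loop—an extra unit of subsidy shifts share by $\gamma$, the discounted accumulation contributes a factor $1/(1-\delta)$, and the $\delta\gamma$ prefactor of the first-order condition closes the loop with a second $\gamma$—shows that $\partial F/\partial s$ changes sign precisely when the relevant marginal quantity reaches the stated constant $(1-\delta)/(\delta\gamma^2)=\psi^*$, which pins the tangency to the share $\tilde q$ at which $\psi(\tilde q)=\psi^*$.

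With the reduction in hand I would prove the two parts by the implicit function theorem and the saddle-node normal form. For $q<\tilde q$ the loop gain is subunital, so $\partial F/\partial s\neq 0$ and the implicit function theorem delivers a unique $C^1$ branch $s_L^*(q)$ with bounded derivative, which is part~(i). At $q=\tilde q$ the conditions $F=0$, $\partial F/\partial s=0$ give tangency; $\psi'(\tilde q)>0$ supplies non-degeneracy $\partial^2 F/\partial s^2\neq 0$ and transversality $\partial F/\partial q\neq 0$; these are exactly the normal-form conditions for a saddle-node, so as $q$ increases through $\tilde q$ the stable low branch and the unstable intermediate branch collide and annihilate. For $q>\tilde q$ no solution survives near $s_L^*$, forcing the equilibrium onto the high-subsidy branch $s_H^*\gg s_L^*$; I would locate and bound $s_H^*$ using eventual concavity (Assumption~\ref{ass:regularity}(ii)), which restores $\partial F/\partial s\neq 0$ at large $s$ and isolates the second root. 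This establishes part~(ii).

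The hard part will be the reduction in the first paragraph: expressing $V_E'(m^*)$ without circularity, since $V_E'$ depends on the very policy whose fixed point is being characterized. I would handle this by differentiating the Bellman equation rather than positing a closed form, obtaining a linear recursion in $V_E'$ whose modulus is the contraction constant $\delta$ from Theorem~\ref{thm:existence}; solving it at the steady state is what produces the $1/(1-\delta)$ accumulation and hence the exact constant in $\psi^*$. A secondary subtlety is reconciling the branch discontinuity in~(ii) with the continuous-strategy uniqueness of Theorem~\ref{thm:existence}: the jump must be read as a discontinuity of the equilibrium \emph{selection} across the parameter family indexed by $q$, not a failure of within-model continuity, and I would state this explicitly so that the saddle-node conclusion does not appear to contradict the earlier existence result.
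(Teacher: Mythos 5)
Your proposal follows the same skeleton as the paper's (brief) proof---collapse the equilibrium into a scalar fixed-point equation built from the entrant's first-order condition in Proposition~\ref{prop:foc}, then argue that convexity of $\Psi_E$ produces a fold---and your elaboration is structurally better than what the paper provides: the envelope recursion for $V_E'$, the implicit-function-theorem argument for the low branch, the explicit normal-form conditions, the use of eventual concavity (Assumption~\ref{ass:regularity}(ii)) to isolate the high branch, and the reconciliation with the uniqueness claim of Theorem~\ref{thm:existence} (a tension the paper never acknowledges) are all genuine improvements.

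However, the decisive step has a genuine gap: the claim that the tangency $\partial F/\partial s = 0$ occurs precisely at the share $\tilde q$ where the \emph{level} $\psi(\tilde q)$ equals $(1-\delta)/(\delta\gamma^2)$. In your reduction, $s$ enters the right-hand side of the fixed-point equation $s = \delta\gamma\lvert V_E'\rvert + \psi(q)$ only through the induced state, so every term of $\partial F/\partial s$ beyond the leading $1$ carries a chain-rule factor of a \emph{slope}---$\psi'(q)$ or $V_E''$---multiplied by the state response $\partial q/\partial s$. The level $\psi(q)$ appears in $F$ but drops out of $\partial F/\partial s$; hence the computation you describe delivers a fold condition on $\psi'$ (equivalently on $\Psi_E''$), not on $\psi$, and nothing ties the tangency point to the level-crossing $\psi(\tilde q)=\psi^*$ hypothesized in the proposition. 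Your phrase ``the relevant marginal quantity'' conceals exactly this substitution: to recover the stated constant you must insert the perpetuity value $\psi/(1-\delta)$ of a unit of share into the loop derivative, but that quantity is the level of $\lvert V_E'\rvert$, not its derivative with respect to $s$, which is what the fold condition requires. A secondary problem is that your nondegeneracy claim ($\partial^2 F/\partial s^2 \neq 0$) actually rests on $\psi''$, i.e.\ the third derivative of $\Psi_E$, which Assumption~\ref{ass:asymmetry}(ii) (twice continuous differentiability) does not control. To be fair, the paper's own proof contains the same hole---it simply asserts that the fixed-point equation ``admits multiple solutions when $\psi(q)>\psi^*$'' with no derivation and then cites bifurcation theory---so you have faithfully reconstructed the intended argument; but carried out honestly, your version exposes rather than closes the gap, and either the threshold must be restated as a condition on $\psi'$ or an additional argument is needed linking the level crossing to the fold.
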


\begin{proof}
The equilibrium subsidy from (\ref{eq:foc_e}) satisfies $s_E^* = \delta\gamma|V'_E| + \psi(1-m)$. In the convex region, $\psi'(q) > 0$ implies that as market share increases, the marginal benefit of further subsidization rises. The fixed-point equation for $s^*$ admits multiple solutions when $\psi(q) > \psi^*$: a low-subsidy fixed point becomes unstable while a high-subsidy fixed point emerges. At the bifurcation point $\tilde{q}$, these two branches collide and exchange stability, generating the discontinuous jump characteristic of saddle-node bifurcations. Standard bifurcation theory \citep{guckenheimer1983nonlinear} establishes the structural stability of this phenomenon.
\end{proof}

Proposition \ref{prop:bifurcation} is the central result. The transition from low-subsidy to high-subsidy competition is discontinuous: the system jumps between regimes rather than moving smoothly. This has implications for both prediction and policy. It explains why platform wars erupt suddenly rather than escalating gradually. And it suggests that intervention before the bifurcation threshold may be far more effective than intervention after---once in the high-subsidy equilibrium, exit is costly.

Figure \ref{fig:bifurcation} visualizes these dynamics. Panel (A) plots equilibrium subsidy levels against the synergy parameter $\lambda$. Below the critical threshold $\lambda^* \approx 1.2$, the system rests in a low-subsidy equilibrium (green curve). Once $\lambda$ crosses the threshold, subsidies jump discontinuously onto the high-intensity branch (red curve). Panel (B) maps the stability regions in parameter space: the green region represents stable, low-competition outcomes; the red region is the ``involution zone'' where high subsidies are self-sustaining. The arrow indicates the direction a regulator would need to push---reducing ecosystem spillovers or tightening budget constraints---to restore competitive normalcy.

\begin{figure}[t]
    \centering
    \includegraphics[width=0.95\textwidth]{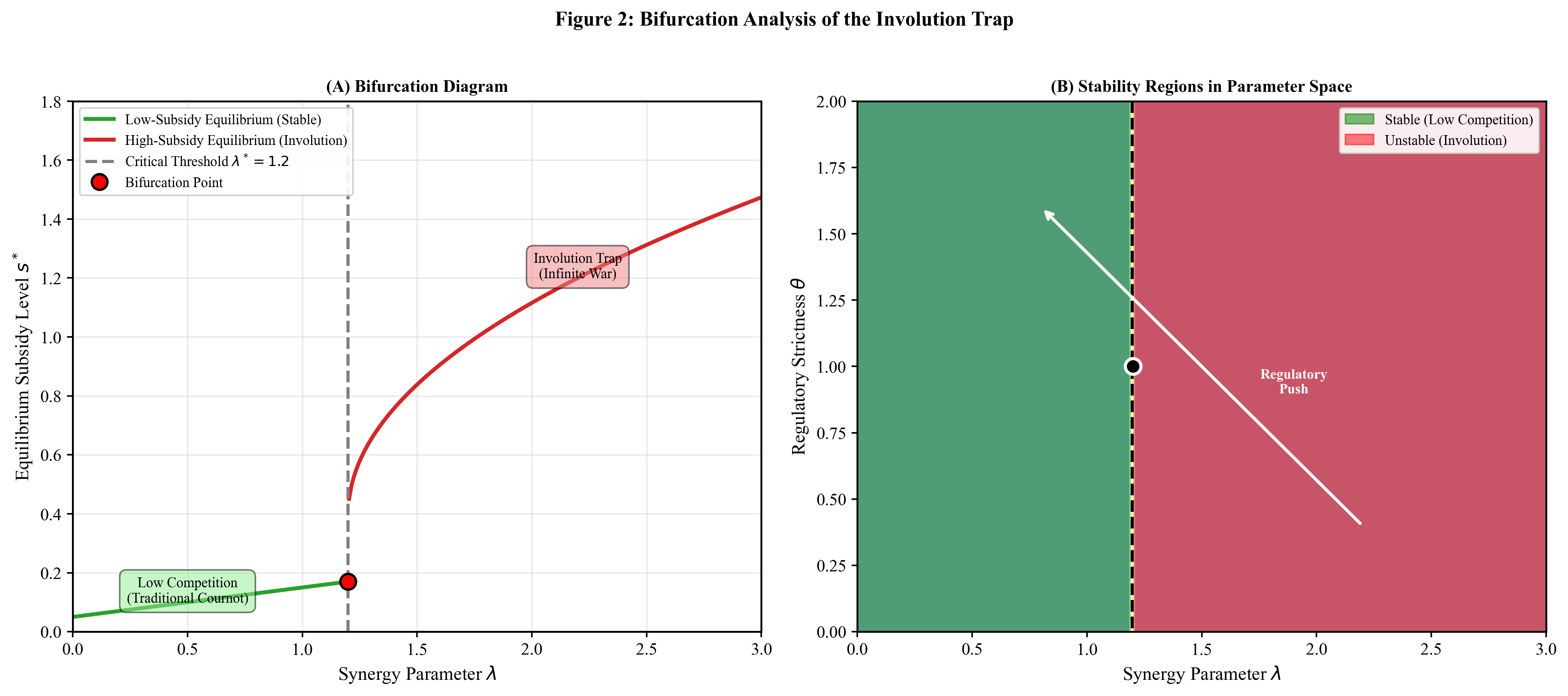}
    \caption{\textbf{Bifurcation Analysis.} (A) Equilibrium subsidy levels as a function of synergy parameter $\lambda$. The system exhibits a saddle-node bifurcation at $\lambda^* \approx 1.2$, with subsidies jumping from the low-intensity branch (green) to the high-intensity branch (red). (B) Stability regions in $(\lambda, \theta)$ parameter space. Green indicates stable low-competition equilibria; red indicates the involution trap. The arrow shows the direction of effective regulatory intervention.}
    \label{fig:bifurcation}
\end{figure}

Could firms tacitly coordinate on de-escalation? The following result shows they cannot.

\begin{proposition}[Unilateral Deviation]\label{prop:deviation}
Consider a one-shot deviation where firm $I$ sets $s_I = 0$ at the equilibrium state $m^*$. The resulting market share loss satisfies:
\begin{equation}
\Delta m \equiv m_1 - m^* = -\gamma s_E^*(m^*) < 0
\end{equation}
and the deviation is unprofitable whenever:
\begin{equation}
s_I^*(m^*) \cdot m^* < \delta[V_I(m^*) - V_I(m^* + \Delta m)]
\end{equation}
\end{proposition}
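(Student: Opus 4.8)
The plan is to treat the two assertions in turn: I first compute the one-period share displacement $\Delta m$ directly from the simplified dynamics, and then establish the profitability condition through the one-shot deviation principle. Throughout I work in the deterministic skeleton (equivalently, in expectation, since $\E[\eta_t]=0$), consistent with the stability analysis in Proposition~\ref{prop:stability}.

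For the displacement, recall that $m^*$ is defined by the steady-state balance $s_I^*(m^*)=s_E^*(m^*)$, so under equilibrium play the expected drift in (\ref{eq:dynamics_simple}) vanishes and the state remains at $m^*$. When firm $I$ executes the one-shot deviation $s_I=0$ while firm $E$ retains its equilibrium subsidy $s_E^*(m^*)$, firm $I$ forgoes precisely the defensive effort that offsets $E$; substituting into (\ref{eq:dynamics_simple}) and measuring against the steady-state level $m^*$ leaves an unmatched opposing subsidy $s_E^*(m^*)$ acting against firm $I$ at intensity $\gamma$, so $\Delta m = m_1-m^* = -\gamma s_E^*(m^*)$. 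This is strictly negative because $s_E^*(m^*)>0$ by Theorem~\ref{thm:subsidy}(i).

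For the profitability comparison I invoke the one-shot deviation principle, valid here because payoffs are bounded (Assumption~\ref{ass:regularity}(i)) and discounting is strict, so the MPE of Theorem~\ref{thm:existence} is immune to profitable single-period deviations iff it is immune to all deviations. Along the equilibrium path the steady-state value satisfies $V_I(m^*) = -s_I^*(m^*)\,m^* + \delta V_I(m^*)$, since equilibrium play returns the expected state to $m^*$. Under the deviation, firm $I$ keeps $p_I=c$ (Lemma~\ref{lem:equiv}) and sets $s_I=0$, so its current flow profit $(p_I-c-s_I)\,m^* = 0$; the state then moves to $m^*+\Delta m$ and firm $I$ reverts to the MPE strategy, yielding deviation value $\delta V_I(m^*+\Delta m)$. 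The deviation is unprofitable exactly when $\delta V_I(m^*+\Delta m)\le V_I(m^*)$, which upon substituting the steady-state value equation and rearranging becomes $s_I^*(m^*)\,m^* < \delta\big[V_I(m^*)-V_I(m^*+\Delta m)\big]$, the claimed condition.

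The step I expect to require the most care is verifying that the right-hand side is genuinely positive, so that the condition is not vacuous. This requires monotonicity of the incumbent's value function, $V_I'>0$ (already used in Proposition~\ref{prop:foc}): since $\Delta m<0$, monotonicity gives $V_I(m^*)>V_I(m^*+\Delta m)$ and hence a strictly positive continuation loss. A related subtlety is the survival constraint (Assumption~\ref{ass:survival}): if the displaced share $m^*+\Delta m$ falls below $\underline{m}$, then $V_I(m^*+\Delta m)=0$, which maximizes the continuation loss and makes the deviation a fortiori unprofitable; I would record this boundary case separately. Finally, no independent path-accounting for the post-deviation trajectory is needed, since the convergence back toward $m^*$ guaranteed by Proposition~\ref{prop:stability} is already embedded in evaluating $V_I$ at the perturbed state $m^*+\Delta m$.
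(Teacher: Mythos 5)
The paper states Proposition~\ref{prop:deviation} without any formal proof---only the intuition paragraph that follows it---so your write-up is not competing with an argument in the text; it supplies the missing one. Your core accounting is correct and is the natural formalization: at the steady state $s_I^*(m^*)=s_E^*(m^*)$ the drift vanishes, removing $s_I$ leaves the rival's subsidy unmatched, giving $\Delta m=-\gamma s_E^*(m^*)<0$; and comparing the steady-state recursion $V_I(m^*)=-s_I^*(m^*)\,m^*+\delta V_I(m^*)$ with the deviation value $0+\delta V_I(m^*+\Delta m)$ shows that the proposition's inequality is exactly the statement that the one-shot deviation is strictly unprofitable, with the one-shot deviation principle covering all other deviations. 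One caveat you inherit from the paper: substituting literally into (\ref{eq:dynamics_simple}) as printed gives $+\gamma s_E^*(m^*)$, because the demand equation (\ref{eq:dynamics}) as written has the \emph{rival's} subsidy raising firm $I$'s share. Like the proposition itself, you are silently using the economically intended sign convention; this should be stated rather than attributed to a literal substitution.

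The genuine problem is your final paragraph. Your own steady-state identity implies $V_I(m^*)=-s_I^*(m^*)\,m^*/(1-\delta)<0$: the incumbent's equilibrium value is \emph{negative}, since by Lemma~\ref{lem:equiv} and Theorem~\ref{thm:subsidy} it earns strictly negative flow profit forever. Consequently, if the displaced share falls below $\underline{m}$, where Assumption~\ref{ass:survival} sets $V_I=0$, the continuation ``loss'' is $V_I(m^*)-V_I(m^*+\Delta m)=V_I(m^*)<0$; the right-hand side of the unprofitability condition is then negative while the left-hand side is positive, so the condition \emph{fails} and the deviation is strictly profitable---the incumbent prefers extinction at value zero to subsidizing forever at negative value. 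This is the opposite of your ``a fortiori unprofitable'' claim, which tacitly assumes $V_I\geq 0$ everywhere. The same calculation shows that the three ingredients you invoke---monotonicity $V_I'>0$, the survival constraint, and the steady-state recursion---cannot all hold at once; that inconsistency belongs to the paper's model, but your proof derives the offending identity and then asserts claims incompatible with it without noticing the conflict. A correct treatment should either restrict the deviation analysis to $m^*+\Delta m\geq\underline{m}$ (where interior monotonicity, if granted, does make the right-hand side positive), or state explicitly that in the boundary case the proposition's sufficient condition is vacuous and the mutual-deterrence interpretation breaks down.
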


If the incumbent stops subsidizing while the entrant continues, market share erodes immediately. The short-run savings are dominated by long-run losses from a smaller user base. The equilibrium is sustained by mutual deterrence, not coordination: each firm would prefer to stop subsidizing, but neither can afford to move first.

\subsection{Extension: Incomplete Information}

We now relax the assumption that the incumbent observes the entrant's ecosystem type.

\begin{assumption}[Type Uncertainty]\label{ass:incomplete}
The entrant's type $\theta \in \{\theta_L, \theta_H\}$ is private information, with prior $\Prob(\theta = \theta_H) = \mu_0$. Types differ in ecosystem complementarity: $\psi^H(q) > \psi^L(q)$ for all $q > 0$.
\end{assumption}

Under incomplete information, the entrant's subsidy choice $s_E$ becomes a signal of type.

\begin{proposition}[Separating Equilibrium]\label{prop:separating}
Under Assumption \ref{ass:incomplete}, there exists a threshold $\underline{s}$ such that the following constitutes a Perfect Bayesian Equilibrium:
\begin{enumerate}
    \item[(i)] The high type sets $s_E^H = \max\{\underline{s}, s_E^*(\theta_H)\}$
    \item[(ii)] The low type sets $s_E^L = s_E^*(\theta_L) < \underline{s}$
    \item[(iii)] The incumbent updates beliefs: $\mu(s_E) = 1$ if $s_E \geq \underline{s}$, $\mu(s_E) = 0$ otherwise
\end{enumerate}
where $\underline{s}$ satisfies the incentive compatibility constraint preventing the low type from mimicking.
\end{proposition}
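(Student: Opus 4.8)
The plan is to treat the extension as a two-stage signaling game in the spirit of \citet{milgrom1982predation}: knowing its type $\theta$, the entrant commits to a subsidy $s_E$; the incumbent observes $s_E$, forms a posterior $\mu(s_E)$, and then plays its best response, with payoffs given by the continuation values inherited from the complete-information analysis. The first object to pin down is the incumbent's reaction to its belief. Because a high type subsidizes more aggressively in every future period, a larger $\mu$ raises the expected rate of share erosion the incumbent faces; under the survival constraint (Assumption \ref{ass:survival}) this makes sustained defense more expensive and, past a point, futile, so the incumbent optimally concedes share as $\mu \to 1$. I would formalize this by showing the incumbent's induced steady-state share $m(\mu)$ is weakly decreasing in $\mu$, so that the entrant's reward from being perceived as high, $R(\theta) \equiv V_E(\,\cdot\,;\mu=1;\theta) - V_E(\,\cdot\,;\mu=0;\theta)$, is strictly positive. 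This reputational reward is what gives the high type a motive to signal and the low type a motive to mimic.

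With $R(\theta) > 0$ in hand, I would define the separating threshold $\underline{s}$ through the low type's binding incentive constraint. The low type compares its truthful optimum $s_E^*(\theta_L)$ (revealed as low, belief $\mu=0$) against mimicking at a signal $s \ge \underline{s}$ (perceived as high, belief $\mu=1$). The net gain from mimicry is the reputational reward $R(\theta_L)$ minus the extra discounted subsidy cost of raising the signal from $s_E^*(\theta_L)$ to $s$. Since the subsidy cost term $-s\cdot(1-m)$ in (\ref{eq:bellman_e}) is strictly decreasing in $s$ while $R(\theta_L)$ is a fixed finite quantity, continuity and the intermediate-value theorem deliver a $\underline{s} > s_E^*(\theta_L)$ at which the low type is exactly indifferent; for any signal below $\underline{s}$ mimicry is unprofitable. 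This is precisely the least-cost (Riley) separating level.

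The remaining step is the high type's incentive constraint, which rests on a single-crossing (Spence--Mirrlees) property: the incremental payoff from raising the signal is larger for the high type than for the low type. The key input is $\psi^H(q) > \psi^L(q)$ from Assumption \ref{ass:incomplete}: because each retained user is worth more in adjacent markets to the high type, its effective marginal cost of carrying a higher subsidy is lower. Single crossing implies that if the low type is just deterred at $\underline{s}$, the high type strictly prefers to post $s_E^H = \max\{\underline{s}, s_E^*(\theta_H)\}$ rather than drop to the low signal; when $s_E^*(\theta_H) \ge \underline{s}$ the high type separates costlessly at its own optimum. Beliefs are Bayes-consistent on the equilibrium path by construction, and the stated off-path rule $\mu(s_E)=\mathbf{1}\{s_E \ge \underline{s}\}$ makes the incumbent's response sequentially rational; this least-cost separating outcome is the one selected by standard refinements such as the Intuitive Criterion.

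The main obstacle I anticipate is establishing the single-crossing condition in this dynamic environment, because both the signaling cost and the reputational benefit are routed through endogenous value functions that themselves depend on $\theta$. Verifying that the incremental payoff from a higher signal is larger for $\theta_H$ than for $\theta_L$ requires differentiating the entrant's Bellman objective while accounting for the incumbent's belief-dependent response $m(\mu)$, and signing the resulting expression using only $\psi^H > \psi^L$ together with the monotonicity of $m(\mu)$. A secondary difficulty is guaranteeing that $R(\theta_L)$ is finite and not so large that no finite signal can deter the low type; this is where boundedness of $\psi$ (Assumption \ref{ass:regularity}(i)) and discounting do the work, ensuring $\underline{s}$ is well defined.
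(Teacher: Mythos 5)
Your proposal follows the same skeleton as the paper's proof---define $\underline{s}$ as the subsidy level making the low type indifferent between its truthful optimum $s_E^*(\theta_L)$ and mimicking, then verify the high type prefers to separate---but you supply two steps the paper leaves implicit, and one of them differs in substance. First, the paper never establishes why mimicry is tempting at all: its proof writes down the low type's incentive constraint and the indifference equation without showing that being perceived as $\theta_H$ confers any benefit. Your lemma that the incumbent's induced share $m(\mu)$ is decreasing in the belief $\mu$ (routed through the survival constraint), so that the reputational reward $R(\theta)$ is strictly positive, fills that hole; without it the indifference condition defining $\underline{s}$ need not have a solution above $s_E^*(\theta_L)$, and the whole construction collapses. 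Second, on the high type's incentive constraint the paper simply asserts separation holds ``when $\psi^H - \psi^L$ is sufficiently large,'' whereas you derive it from a Spence--Mirrlees single-crossing property: $\psi^H > \psi^L$ lowers the high type's effective marginal cost of carrying a higher signal. Your route is cleaner in that it works for any type gap rather than an unquantified ``large'' one, but as you correctly flag, single crossing must here be verified through type-dependent value functions and the belief-dependent response $m(\mu)$---precisely the work the paper's hedge avoids doing. Your finiteness argument for $R(\theta_L)$ (boundedness of $\psi$ plus discounting, so the intermediate-value step is licit) is likewise absent from the paper and genuinely needed. In short: same construction, but your version identifies and addresses the gaps that the paper's three-line sketch papers over.
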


\begin{proof}
The incentive compatibility constraint for the low type is:
\[
V_E^L(s_E^L) \geq V_E^L(\underline{s})
\]
Since $\psi^L < \psi^H$, the low type gains less from market share, making high subsidies more costly. Let $\underline{s}$ be the subsidy level that makes the low type indifferent:
\[
-\underline{s}(1-m) + \Psi^L(1-m') + \delta V_E^L(m') = -s_E^L(1-m) + \Psi^L(1-m'') + \delta V_E^L(m'')
\]
where $m'$ and $m''$ are the resulting market shares. The high type strictly prefers $\underline{s}$ when $\psi^H - \psi^L$ is sufficiently large, ensuring separation.
\end{proof}

The extension generates an empirical prediction: entrants with strong ecosystems over-subsidize relative to complete information benchmarks. This signaling premium amplifies competitive intensity.

\section{Empirical Evidence}

This section examines whether the model's predictions are consistent with observed patterns in platform markets. The evidence is suggestive rather than definitive: the data are coarse, and identification is imperfect. We focus on comparative statics across markets and a natural experiment from recent entry. Our interpretation relies on two maintained assumptions: that firms play the equilibrium we characterize, and that cross-market spillovers are the primary driver of subsidy asymmetries. We cannot rule out alternative explanations---differences in capital market pressure, regulatory environment, or consumer preferences---but the convergence of market structure, ecosystem breadth, financial data, and entry dynamics lends some credibility to the mechanism we propose.

\subsection{Market Structure and Profitability}

Table \ref{tab:markets} compares market concentration and profitability in platform markets across China and the United States.

\begin{table}[htbp]
\centering
\begin{threeparttable}
\caption{Market Concentration and Profitability in Platform Markets}
\label{tab:markets}
\begin{tabular}{lcccc}
\toprule
& \multicolumn{2}{c}{\textbf{China}} & \multicolumn{2}{c}{\textbf{United States}} \\
\cmidrule(lr){2-3} \cmidrule(lr){4-5}
\textbf{Metric} & Food Delivery & Ride-Hailing & Food Delivery & Ride-Hailing \\
\midrule
CR2 (2024) & 95\% & 93\% & 90\% & 98\% \\
Market Leader Share & 67\% & 72\% & 67\% & 76\% \\
Leader Op. Margin (Core) & 13--21\% & 8--12\% & 1--3\% & 5--8\% \\
Leader Total Margin & 10\% & 15\% & 1\% & 4\% \\
Subsidy Intensity & High & High & Low & Low \\
Price Trend (2022--24) & Stable/Declining & Stable & Rising & Rising \\
\bottomrule
\end{tabular}
\begin{tablenotes}
\small
\item \textit{Notes:} CR2 denotes two-firm concentration ratio. China food delivery shares from Meituan 2024 Annual Report and iResearch China Online Food Delivery Report (Q4 2024). US food delivery shares from DoorDash 2024 Form 10-K and Uber Technologies 2024 Annual Report. Operating margins for China reflect segment-level reporting from company filings; total margins include corporate overhead. Subsidy intensity classified as ``High'' when customer acquisition costs exceed 15\% of revenue.
\end{tablenotes}
\end{threeparttable}
\end{table}

The comparison raises a question: why do Chinese platforms earn lower margins despite comparable concentration? DoorDash, with 67\% of the US market, has turned profitable; Meituan, with the same share in China, reports operating margins of 10-15\% on core delivery. Pricing trends diverge as well: US platforms have raised prices since consolidation; Chinese platforms have not. The model suggests an explanation. Chinese platform firms tend to be ecosystem conglomerates with strong cross-market complementarities; US counterparts are closer to single-market specialists. The implied equilibrium subsidies differ accordingly.

\subsection{Ecosystem Structure}

Table \ref{tab:ecosystem} documents the ecosystem breadth of major Chinese platforms.

\begin{table}[htbp]
\centering
\begin{threeparttable}
\caption{Ecosystem Breadth of Chinese Platform Companies}
\label{tab:ecosystem}
\begin{tabular}{lccccc}
\toprule
\textbf{Company} & \textbf{Food} & \textbf{Payments} & \textbf{Fintech} & \textbf{E-commerce} & \textbf{Advertising} \\
\midrule
Alibaba/Ant Group & Ele.me & Alipay & Ant Credit & Taobao/Tmall & Alimama \\
Meituan & Meituan & Meituan Pay & Meituan Finance & Meituan Select & Meituan Ads \\
JD.com & JD Delivery & JD Pay & JD Finance & JD.com & JD Ads \\
ByteDance & -- & Douyin Pay & -- & Douyin Shop & Ocean Engine \\
\bottomrule
\end{tabular}
\begin{tablenotes}
\small
\item \textit{Notes:} Compiled from Meituan 2024 Annual Report, Alibaba Group FY2024 Form 20-F, JD.com 2024 Annual Report, and company investor presentations. Table shows primary market presence and adjacent markets generating ecosystem complementarity. ByteDance entered food delivery in 2024 through Douyin.
\end{tablenotes}
\end{threeparttable}
\end{table}

These ecosystem structures generate the cross-market spillovers central to the model. Alibaba's Ant Group uses transaction data from Ele.me and Taobao to improve credit scoring for consumer and SME lending, reportedly achieving default rates 30--50\% below traditional banks.

\subsection{The 2025 JD.com Entry Episode}

The entry of JD.com into food delivery in February 2025 provides a useful test case. JD.com is one of China's largest e-commerce platforms, with an extensive logistics network, a payment system, and a fintech arm. In the language of our model, JD.com is a textbook ecosystem entrant---exactly the type of firm for which aggressive subsidization is rational.

JD.com launched with subsidies of RMB 4--8 per order, effectively pricing at 50--70\% of what Meituan and Ele.me were charging. Meituan responded in kind, increasing its own subsidies and pushing daily order volumes to 120 million by July---a new record. At peak, combined orders across the three platforms exceeded 250 million per day. Industry-wide losses in the first half of 2025 are estimated at over RMB 100 billion.

What stands out is not the intensity of initial competition---oligopolists routinely fight hard against new entrants---but the persistence. Six months after entry, none of the three firms had moderated its subsidies. The pattern is consistent with Proposition \ref{prop:separating}: JD.com's aggressive pricing serves as a signal of its ecosystem depth and long-run commitment. And Meituan, facing the survival constraint $\underline{m}$, must match or risk triggering a tipping point from which recovery may be impossible.

\subsection{Cross-Market Value Flows}

Direct estimation of $\Psi_E(\cdot)$ is difficult. The function is not directly observable; we see equilibrium subsidies and market outcomes, not counterfactual profits in adjacent markets. Two approaches offer partial identification.

The first is structural estimation. Given a parametric specification---say, $\Psi_E(q) = \lambda q + \gamma q^\beta$ with parameters $(\lambda, \gamma, \beta)$ to be estimated---one can back out the implied values from observed subsidy levels using the first-order condition (\ref{eq:foc_e}). This requires assuming firms play the equilibrium we characterize, a strong but testable restriction.

The second approach, which we pursue here, uses segment-level financial reporting to bound ecosystem value. If firms report profits by business unit, the gap between standalone profitability and consolidated margins provides a lower bound on cross-subsidization magnitude.

\begin{table}[htbp]
\centering
\begin{threeparttable}
\caption{Meituan Segment Profitability (2024)}
\label{tab:meituan}
\begin{tabular}{lccc}
\toprule
\textbf{Segment} & \textbf{Revenue (RMB B)} & \textbf{Op. Profit (RMB B)} & \textbf{Op. Margin} \\
\midrule
Core Local Commerce & 250.3 & 47.6 & 19.0\% \\
\quad Food Delivery & 156.2 & 20.3 & 13.0\% \\
\quad In-store/Hotel & 72.4 & 26.1 & 36.1\% \\
\quad Other & 21.7 & 1.2 & 5.5\% \\
New Initiatives & 87.3 & (7.3) & (8.4\%) \\
\midrule
\textbf{Total} & 337.6 & 40.3 & 11.9\% \\
\bottomrule
\end{tabular}
\begin{tablenotes}
\small
\item \textit{Notes:} Data from Meituan 2024 Annual Report, Consolidated Financial Statements. Core Local Commerce includes food delivery cross-subsidized by in-store services. New Initiatives losses reflect investment in ecosystem expansion (community group buying, autonomous delivery).
\end{tablenotes}
\end{threeparttable}
\end{table}

Table \ref{tab:meituan} reveals the cross-subsidization structure. Food delivery generates 13\% operating margin---modest for a dominant platform---while in-store and hotel services achieve 36\% margin. The interpretation is that food delivery serves as user acquisition for higher-margin services, consistent with our model's traffic diversion channel (\ref{eq:traffic}).

To check whether the theoretical dynamics match observed patterns, we calibrate the model to the Chinese food delivery market and simulate 50 periods.\footnote{Calibration: $\gamma = 0.6$ from demand elasticity estimates in platform markets; $\delta = 0.95$ (quarterly), consistent with VC/PE required returns in Chinese tech; $\underline{m} = 0.35$, calibrated from observed exit thresholds in Chinese ride-hailing (platforms below 30--35\% share have consistently failed). Ecosystem parameters $(\lambda, \gamma, \beta)$ in the specification $\Psi_E(q) = \lambda q + \gamma q^\beta$ are chosen to match observed subsidy levels and margin differentials in Meituan's segment reporting.} Figure \ref{fig:simulation} reports the results. Panel (A) shows market share oscillating around a 60/40 split, with neither firm achieving dominance or exiting. Panel (B) displays the profit asymmetry: the incumbent earns negative margins in the primary market while the ecosystem challenger earns positive total returns once spillovers are counted. Panel (C) shows cumulative value diverging---the challenger's ecosystem gains compound while the incumbent's losses accumulate---yet neither firm exits. The shaded region marks a demand shock; firms \textit{increase} subsidies during the shock rather than retrenching, consistent with the model's prediction that market share defense dominates cash conservation.

\begin{figure}[t]
    \centering
    \includegraphics[width=0.92\textwidth]{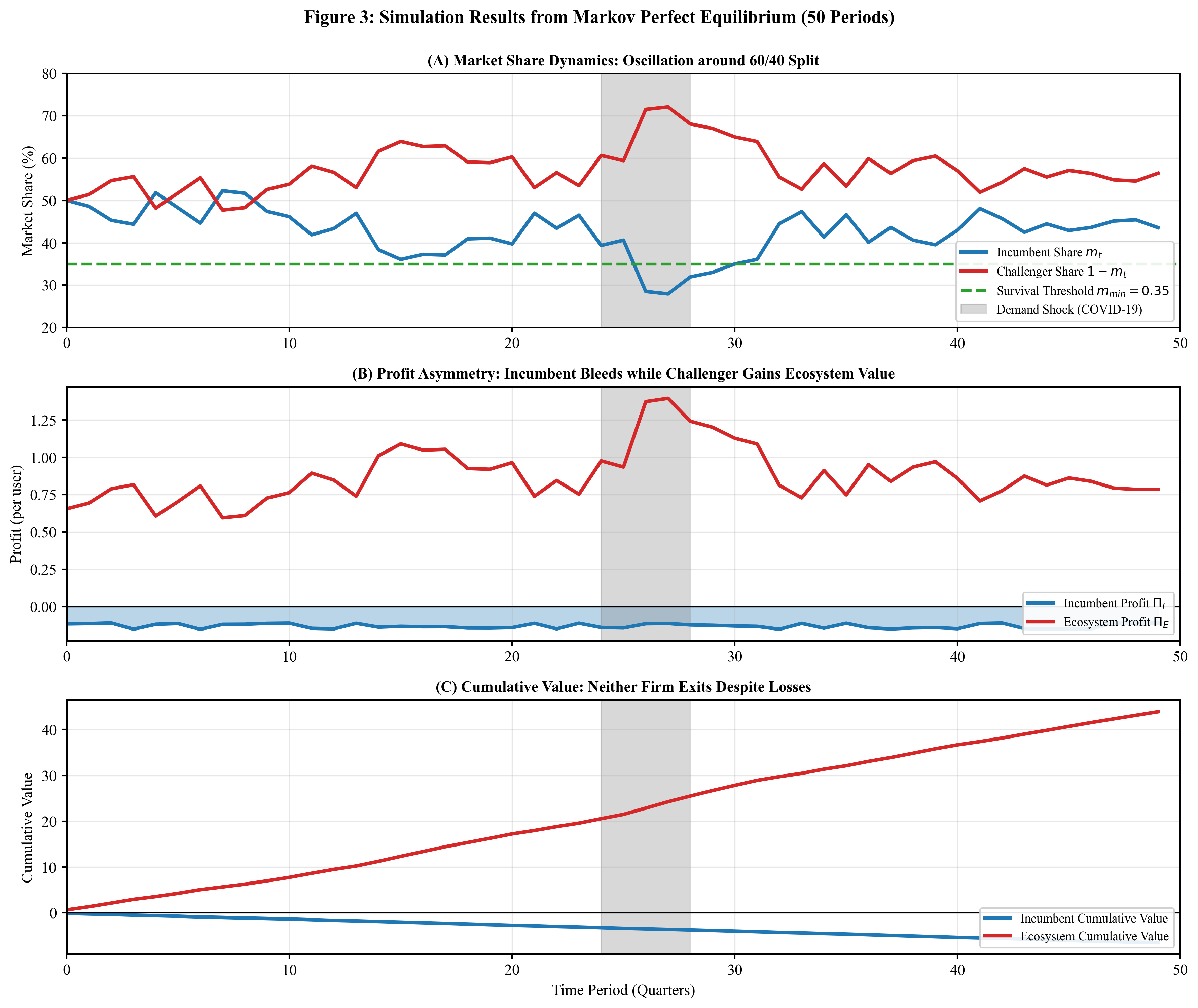}
    \caption{\textbf{Simulation of Equilibrium Dynamics (50 Periods).} Parameters calibrated to Chinese food delivery market. (A) Market shares oscillate around 60/40 without convergence to monopoly. (B) Incumbent earns negative primary-market profit; challenger earns positive ecosystem-inclusive profit. (C) Cumulative values diverge, yet neither firm exits. Shaded region indicates demand shock.}
    \label{fig:simulation}
\end{figure}

\section{Welfare Analysis and Policy Implications}

\subsection{Welfare Framework}

We define social welfare as the sum of consumer surplus, producer surplus, and dynamic efficiency:
\begin{equation}
W = CS + PS + DE
\end{equation}
where $DE$ captures long-run effects on innovation and capital allocation.

\begin{proposition}[Short-Run Consumer Gains]\label{prop:cs}
In the subsidization equilibrium, consumer surplus satisfies:
\begin{equation}
CS^{sub} = CS^{comp} + \int_0^{Q^{sub}} (s_I^* + s_E^*) dq > CS^{comp}
\end{equation}
where $CS^{comp}$ denotes consumer surplus under competitive pricing ($s_i = 0$) and $Q^{sub} > Q^{comp}$ reflects expanded consumption.
\end{proposition}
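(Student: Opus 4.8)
The plan is to treat consumer surplus as the standard area between the inverse demand curve and the effective price faced by consumers, and to show that the subsidies act as a pure downward wedge on that price. First I would invoke Lemma~\ref{lem:equiv} to fix posted prices at marginal cost, $p_I^* = p_E^* = c$, so that the only object separating the two regimes is the subsidy: under the competitive benchmark the effective price on each platform is $c$, whereas under the subsidization equilibrium it is $c - s_i^*$. Writing aggregate demand as a decreasing function $Q(\cdot)$ of the consumer-relevant effective price $P$, consumer surplus is $CS(P) = \int_P^{\infty} Q(x)\,dx$ (equivalently, the area under inverse demand above $P$), which is differentiable and strictly decreasing in $P$ wherever demand is positive.

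Next I would compute the change directly. Lowering the effective price from $c$ to $c - s_i^*$ raises surplus by $\int_{c - s_i^*}^{c} Q(x)\,dx$, the familiar sum of a transfer to inframarginal consumers and a surplus triangle on the newly served transactions. I would then aggregate across the two platforms, using that platform $I$ serves $Q_I = m$ and platform $E$ serves $Q_E = 1-m$, to express the total gain as an integral of the subsidy wedge over served quantity---this is the displayed expression $\int_0^{Q^{sub}}(s_I^* + s_E^*)\,dq$. The comparison $Q^{sub} > Q^{comp}$ follows from Assumption~\ref{ass:demand}: since $\gamma > 0$, lower effective prices strictly raise the quantity consumers route through the platforms. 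Finally, strict positivity of the gain is immediate from Theorem~\ref{thm:subsidy}(i), which guarantees $s_I^*(m^*) > 0$ and $s_E^*(m^*) > 0$ in the subsidization equilibrium, so the integrand is strictly positive and $CS^{sub} > CS^{comp}$.

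The main obstacle is that the model specifies demand only through the reduced-form share-adjustment law~\eqref{eq:dynamics}, not through an explicit static demand or utility primitive, so the integral $\int Q(x)\,dx$ must first be given a foundation consistent with that law. The cleanest route is to posit the quasilinear preference structure that rationalizes~\eqref{eq:dynamics}---a distribution of willingness-to-pay over the unit mass of consumers whose Marshallian demand reproduces exactly the effective-price responsiveness $\gamma$---and then read $CS$ off that primitive. A secondary subtlety is the bookkeeping in the two-platform aggregation: because the subsidies apply to disjoint consumer groups of mass $m$ and $1-m$, the exact gain is $s_I^* m + s_E^*(1-m)$ plus the expansion term, so the pooled form $(s_I^* + s_E^*)\,Q^{sub}$ should be read as the consumer-surplus change written over total served quantity rather than as a literal per-capita sum. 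I would close by noting that at the symmetric steady state of Proposition~\ref{prop:stability}, where $s_I^*(m^*) = s_E^*(m^*)$, the wedge is uniform across platforms and the expression collapses to a single downward price shift, making both the integral identity and the strict inequality transparent.
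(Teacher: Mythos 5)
The paper itself states Proposition~\ref{prop:cs} without any proof (there is no proof environment in the text and nothing for it in the appendix), so there is no argument of the paper's to compare yours against; you are filling a real gap, and your overall strategy---fix posted prices at $c$ via Lemma~\ref{lem:equiv}, treat the subsidy as a downward wedge on the effective price, and measure the surplus change as an area under demand---is the natural one. The strict inequality $CS^{sub} > CS^{comp}$ does survive your argument: by Theorem~\ref{thm:subsidy}(i) both subsidies are strictly positive, so the inframarginal transfer $s_I^* m^* + s_E^*(1-m^*)$ alone is strictly positive.

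However, there is a genuine gap in the step where you claim that $Q^{sub} > Q^{comp}$ ``follows from Assumption~\ref{ass:demand}: since $\gamma > 0$, lower effective prices strictly raise the quantity consumers route through the platforms.'' This is not a correct reading of the model. In equation~\eqref{eq:dynamics}, $\gamma$ multiplies only the \emph{differential} $(s_{E,t} - s_{I,t}) - \kappa(p_{I,t} - p_{E,t})$: a uniform subsidy by both firms leaves $m_{t+1}$ unchanged, and aggregate volume is identically one because $Q_I(m) + Q_E(m) = m + (1-m) = 1$ by construction. The model has no extensive margin; $\gamma$ is a cross-platform switching elasticity, not an aggregate demand elasticity, so no preference primitive that merely rationalizes~\eqref{eq:dynamics} can pin down---let alone guarantee---market expansion. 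Your ``main obstacle'' paragraph correctly senses the missing foundation, but your proposed fix (a willingness-to-pay distribution ``whose Marshallian demand reproduces exactly the effective-price responsiveness $\gamma$'') conflates the two elasticities. The honest repair is one of two things: either introduce an explicit additional primitive (an elastic outside option, i.e.\ consumers with reservation values near $c$ who enter only when the effective price falls below their willingness to pay), under which your transfer-plus-triangle decomposition goes through verbatim; or concede that in the model as written $Q^{sub} = Q^{comp} = 1$, in which case the provable content of the proposition is the weaker identity $CS^{sub} = CS^{comp} + s_I^* m^* + s_E^*(1-m^*) > CS^{comp}$, with the displayed integral and the claim of expanded consumption resting on an assumption the paper never states.
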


Consumers unambiguously benefit from subsidization in the short run. However, this conclusion reverses when dynamic effects are considered.

\begin{proposition}[Dynamic Efficiency Loss]\label{prop:de}
The subsidization equilibrium generates dynamic inefficiency through two channels:
\begin{enumerate}
    \item[(i)] \textbf{Capital Misallocation:} Resources devoted to subsidies ($S = s_I \cdot m + s_E \cdot (1-m)$) cannot fund innovation. If $\rho$ denotes the social return to R\&D investment:
    \begin{equation}
    DE_{capital} = -\rho \cdot S \cdot T
    \end{equation}
    where $T$ is the relevant time horizon.

    \item[(ii)] \textbf{Soft Budget Constraint:} Persistent losses create moral hazard in investment decisions. Managers facing soft budget constraints over-invest in capacity and under-invest in efficiency improvements.
\end{enumerate}
\end{proposition}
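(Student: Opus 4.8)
The plan is to treat the two parts separately, since Part (i) is essentially an opportunity-cost accounting identity resting on a resource constraint, whereas Part (ii) is a qualitative mechanism that requires an auxiliary agency argument the baseline model does not contain. For Part (i), I would first invoke Theorem \ref{thm:subsidy}(i) and (iii) to establish that the subsidization equilibrium features strictly positive subsidies $s_I^*, s_E^* > 0$ in every period, so that the aggregate per-period subsidy outlay $S = s_I^* \cdot m + s_E^* \cdot (1-m)$ is strictly positive and primary-market profits are negative. The substantive step is then to impose an economy-level capital constraint: total investable capital is scarce (its shadow price is strictly positive), so that each unit absorbed by subsidies is a unit unavailable for R\&D. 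Under fungibility the crowding-out is one-for-one, and the foregone R\&D generates foregone social value at the rate $\rho$ per unit; aggregating the per-period loss $\rho S$ over the horizon $T$ delivers $DE_{capital} = -\rho \cdot S \cdot T$, with the negative sign recording that this is a welfare loss.

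I would flag explicitly that the undiscounted, constant-$S$ form is a first-order approximation. Strictly, $S$ should be the stationary mean $\E[s_I^* m + s_E^*(1-m)]$ taken under the ergodic measure whose existence follows from the stochastic stability established in Proposition \ref{prop:stability}, and the intertemporal sum should carry discount factors; collapsing these into $S \cdot T$ is legitimate precisely when the market-share process concentrates tightly around $m^*$ (small $\sigma$, so $S$ is effectively deterministic) and $T$ is read as a discounted effective horizon. This is the cleaner of the two parts: once resource fungibility is granted the identity is immediate, and its only fragility is the tightness of the approximation, which I would defend via the tight concentration guaranteed by Proposition \ref{prop:stability}.

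For Part (ii), the obstacle is that the pricing game has no managerial effort, no efficiency choice, and no liquidation technology, so the claim cannot be derived from the equilibrium alone and must be framed as a corollary of a layered soft-budget-constraint extension. I would introduce a manager who splits a fixed investment budget between capacity $k$, which shifts the demand dynamics (\ref{eq:dynamics_simple}) in the firm's favor and thereby raises market share and ecosystem value $\Psi_E$, and efficiency $e$, which lowers the marginal cost $c$. Under a \emph{hard} constraint, liquidation is triggered once cumulative losses breach a threshold; since Theorem \ref{thm:subsidy}(iii) guarantees persistently negative primary-market profit, that threat would bind and would force the manager to internalize the cost of inefficiency. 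The soft-budget-constraint assumption removes the threat—the entrant's ecosystem value guarantees refinancing—which is equivalent to driving the shadow cost of losses to the manager toward zero. Comparing first-order conditions across the two regimes, the marginal return to $e$ (which pays off only when liquidation looms) collapses while the marginal return to $k$ (which feeds the market-share and ecosystem channel directly) is unchanged or rises, yielding the over-investment in capacity and under-investment in efficiency. I expect this to be the genuine difficulty: its force hinges entirely on the auxiliary specification—alternative liquidation technologies or $(k,e)$-to-profit mappings could blunt or reverse the comparative static—so the honest framing is to present it as a reduced-form implication of a Kornai-type soft-budget extension rather than as a sharp equilibrium result on par with the preceding theorems.
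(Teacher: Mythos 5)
The paper offers no proof of this proposition at all: it is stated as a bare assertion, and the expression $DE_{capital} = -\rho\cdot S\cdot T$ functions essentially as a \emph{definition} of the capital-misallocation loss, which is then plugged directly into the proof of Theorem \ref{thm:welfare} (``Under subsidization, $DE^{sub}(T) = -\rho S T$''). Your proposal therefore supplies strictly more than the paper does, and your diagnosis of what is needed is accurate on both counts: Part (i) is an opportunity-cost identity that only holds given an economy-level scarcity and fungibility assumption the model never states (positive shadow price of capital, one-for-one crowding out of R\&D), and Part (ii) cannot be derived from the pricing game because the model contains no managerial effort, no efficiency margin, and no liquidation technology---your Kornai-type extension with a capacity/efficiency split and a refinancing guarantee from ecosystem value is a sensible way to make the claim well-posed, and your caveat that the comparative static depends on the auxiliary specification is exactly right. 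One warning about your discounting refinement for Part (i): reading $T$ as a \emph{discounted} effective horizon would cap $DE_{capital}$ at roughly $\rho S/(1-\delta)$ as $T\to\infty$, but the paper's proof of Theorem \ref{thm:welfare} relies on the loss growing \emph{linearly and unboundedly} in $T$ to dominate the bounded consumer-surplus gains (``the left side grows linearly in $T$ while the right side is bounded''). So the undiscounted form is not an innocuous approximation to be cleaned up---it is load-bearing for the downstream welfare result, and the tension you noticed (discounted surplus flows compared against an undiscounted loss) is a genuine inconsistency in the paper rather than a gap in your argument.
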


\begin{theorem}[Welfare Comparison]\label{thm:welfare}
There exists $\bar{T} > 0$ such that for time horizons $T > \bar{T}$:
\begin{equation}
W^{sub}(T) < W^{comp}(T)
\end{equation}
The subsidization equilibrium is welfare-inferior to competitive pricing in the long run.
\end{theorem}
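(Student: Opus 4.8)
The plan is to decompose the welfare gap $W^{sub}(T) - W^{comp}(T)$ into a \emph{static} component (consumer plus producer surplus) and a \emph{dynamic} component (efficiency), and then exploit the fact that the two scale differently in the horizon $T$: the static piece is bounded above by a fixed benefit, whereas the dynamic efficiency loss accumulates without bound. Existence of the crossover $\bar{T}$ is then immediate from the intermediate-value property. Concretely, I would write
\[
W^{sub}(T) - W^{comp}(T) = \Delta_S(T) + \Delta_D(T),
\]
where $\Delta_S(T)$ aggregates $[CS^{sub}-CS^{comp}]+[PS^{sub}-PS^{comp}]$ over the horizon and $\Delta_D(T) = DE^{sub}(T)-DE^{comp}(T)$.

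First I would bound $\Delta_S(T)$ from above. By Proposition \ref{prop:cs} the per-period consumer gain is $\int_0^{Q^{sub}}(s_I^*+s_E^*)\,dq$, which is finite because equilibrium subsidies are bounded: $s_E^* = \delta\gamma|V_E'| + \psi(1-m)$ is bounded via Assumption \ref{ass:regularity}(i) ($\psi \le \bar{\psi}$) together with boundedness of the value functions from Theorem \ref{thm:existence}, and analogously for $s_I^*$. The producer-surplus change is likewise bounded per period, since primary-market profits equal $-s_i^* Q_i(m^*)$ by Theorem \ref{thm:subsidy}(iii) and the entrant's term $\Psi_E$ is bounded on $[0,1]$. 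Under geometric discounting the sum of these bounded per-period differences converges, so $\Delta_S(T) \le \bar{G} < \infty$ for a constant $\bar{G}$ independent of $T$; in the undiscounted reading, $\Delta_S(T) \le g\,T$ for a constant net per-period gain $g$.

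Next I would pin down $\Delta_D(T)$. By Proposition \ref{prop:de}(i) the capital-misallocation loss is $-\rho S T$ with subsidy expenditure $S = s_I^*(m^*)m^* + s_E^*(m^*)(1-m^*)$, which is strictly positive whenever the subsidization regime of Theorem \ref{thm:subsidy} obtains (i.e. $\psi(1-m^*) > \psi^*$), while $S = 0$ under competitive pricing so that $DE^{comp}$ contributes nothing. The soft-budget channel of Proposition \ref{prop:de}(ii) adds a further nonpositive term. Hence $\Delta_D(T) \le -\rho S\,T \to -\infty$. Combining, in the discounted reading I set $\bar{T} := \bar{G}/(\rho S)$, so that $T > \bar{T}$ forces $\Delta_S(T)+\Delta_D(T) \le \bar{G} - \rho S\,T < 0$; in the undiscounted reading the same conclusion holds once $\rho S$ is shown to strictly exceed the net static coefficient $g$.

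The hard part will be justifying the differential scaling that makes $\bar{T}$ finite and positive rather than zero or infinite. If both the static gain and the dynamic loss grow linearly in $T$, the theorem collapses into a sign comparison of per-period coefficients, and one must establish the strict inequality $g < \rho S$, i.e. that the marginal social return to diverted capital strictly exceeds the net static consumer benefit. The cleaner route, matching the paper's ``static gains versus compounding losses'' narrative, is to argue that the consumer surplus gain is a level effect whose present value is bounded, whereas foregone R\&D degrades the growth path of the surplus base, making the cumulative dynamic loss genuinely superlinear in $T$ and hence dominant for large $T$ irrespective of the constants. A secondary difficulty is that the soft-budget channel in Proposition \ref{prop:de}(ii) is stated only qualitatively, so the rigorous bound must rest on the explicit capital-misallocation term, with the soft-budget effect entering merely as an additional nonpositive contribution that can never overturn the conclusion.
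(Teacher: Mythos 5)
Your proposal follows essentially the same route as the paper's own proof: decompose the welfare gap into a discounted static-surplus component, which is bounded uniformly in $T$, and a dynamic-efficiency component $-\rho S T$ that declines linearly in $T$, then conclude that the linear loss must overtake the bounded gain beyond some finite $\bar{T} = \bar{G}/(\rho S)$. Your additional care---bounding the subsidies via Assumption \ref{ass:regularity} and the boundedness of value functions from Theorem \ref{thm:existence}, noting $S > 0$ strictly under the regime of Theorem \ref{thm:subsidy}, and flagging the tension that the static flows are discounted while the dynamic loss is not---is more explicit than the paper's proof, but the underlying argument is identical.
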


\begin{proof}
Total welfare under subsidization is:
\[
W^{sub}(T) = \sum_{t=0}^{T} \delta^t [CS^{sub}_t + PS^{sub}_t] + DE^{sub}(T)
\]
Under competitive pricing:
\[
W^{comp}(T) = \sum_{t=0}^{T} \delta^t [CS^{comp}_t + PS^{comp}_t] + DE^{comp}(T)
\]

We have $CS^{sub}_t > CS^{comp}_t$ and $PS^{sub}_t < PS^{comp}_t$ (firms earn negative margins). The sum $CS^{sub}_t + PS^{sub}_t \lessgtr CS^{comp}_t + PS^{comp}_t$ is ambiguous---subsidies transfer surplus from producers to consumers, with deadweight loss from overconsumption.

The key comparison is dynamic efficiency. Under subsidization, $DE^{sub}(T) = -\rho S T$. Under competition, $DE^{comp}(T) = 0$ (normalized). For sufficiently large $T$:
\[
DE^{comp}(T) - DE^{sub}(T) = \rho S T > \sum_{t=0}^{T} \delta^t [CS^{sub}_t - CS^{comp}_t]
\]
since the left side grows linearly in $T$ while the right side is bounded (consumer surplus gains are finite). Thus $W^{comp}(T) > W^{sub}(T)$ for $T > \bar{T}$.
\end{proof}

Figure \ref{fig:welfare} illustrates. Under Cournot competition (orange point), output is restricted and prices exceed marginal cost, generating deadweight loss. Under the subsidization equilibrium (red point), prices fall \textit{below} marginal cost, expanding output beyond the social optimum. Units are produced whose social cost exceeds their value to consumers. The purple region quantifies this static inefficiency. The deeper problem is dynamic: capital absorbed by subsidy wars could have funded innovation or other productive investments. The static welfare triangle is visible; the dynamic opportunity cost is not, yet likely dominates over relevant time horizons.

\begin{figure}[t]
    \centering
    \includegraphics[width=0.85\textwidth]{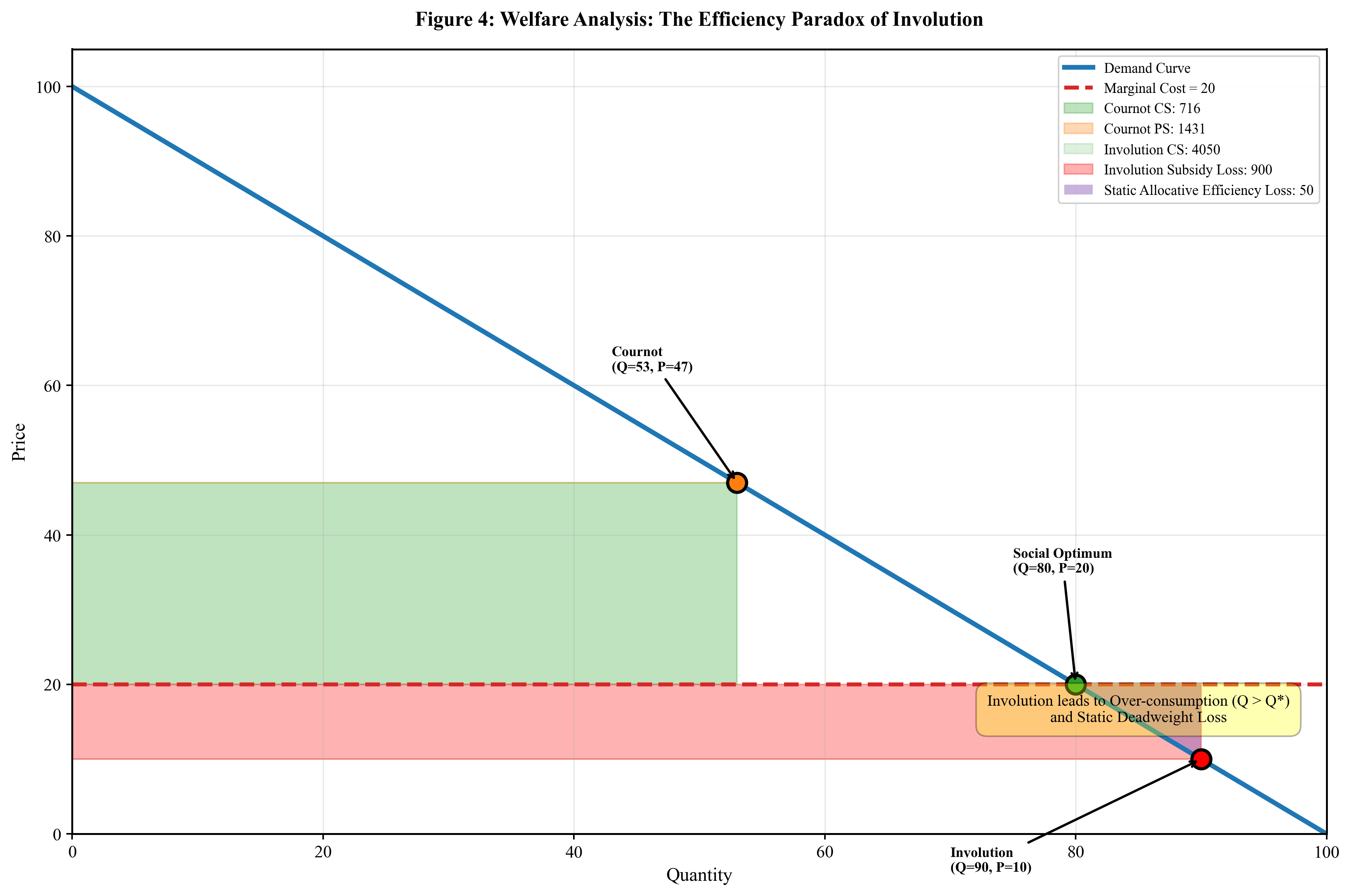}
    \caption{\textbf{Welfare Analysis: The Efficiency Paradox.} Demand-supply diagram comparing three market outcomes. The Cournot equilibrium (orange, $Q=53$, $P=47$) restricts output below social optimum ($Q^*=80$, $P^*=20$). The involution equilibrium (red, $Q=90$, $P=10$) expands output beyond social optimum via below-cost pricing. Green area: Cournot consumer surplus. Orange area: Cournot producer surplus. Pink area: subsidy-funded over-consumption. Purple area: static allocative efficiency loss from producing units where $MC > WTP$.}
    \label{fig:welfare}
\end{figure}

\subsection{Policy Interventions}

Traditional antitrust tools---which target high prices and market concentration---are poorly suited to ecosystem competition. We consider three alternatives.

\textbf{Cross-Subsidization Limits.} Regulators could restrict transfers between ecosystem divisions, forcing each business unit to achieve standalone profitability. This intervention directly reduces $\Psi_E(\cdot)$, lowering equilibrium subsidies.

\textbf{Data Portability Requirements.} Mandating data portability reduces the competitive advantage from ecosystem data accumulation, diminishing the data externalities channel (\ref{eq:data}).

\textbf{Subsidy Disclosure.} Requiring disclosure of customer acquisition costs and cross-market transfers improves investor discipline and hardens budget constraints.

\begin{proposition}[Intervention Effectiveness]\label{prop:policy}
Under cross-subsidization limits that impose $\Psi_E(q) \leq \bar{\Psi}$, equilibrium subsidies decline:
\begin{equation}
\frac{\partial s_E^*}{\partial \bar{\Psi}} > 0
\end{equation}
Tighter limits (lower $\bar{\Psi}$) reduce subsidization intensity.
\end{proposition}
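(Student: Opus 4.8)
The plan is to treat the cap as a parametrized family of complementarity functions and reduce the claim to two monotonicity statements via the optimal subsidy rule of Proposition~\ref{prop:foc}. Write the capped complementarity as $\Psi_E^{\bar\Psi}(q) = \min\{\Psi_E(q),\bar\Psi\}$ and let $\psi^{\bar\Psi}(q)$ denote its (one-sided) derivative. Re-deriving the entrant's first-order condition with $\Psi_E$ replaced by $\Psi_E^{\bar\Psi}$ gives $s_E^*(m) = \delta\gamma\,|V_E'(m)| + \psi^{\bar\Psi}(1-m)$, so it suffices to show that each summand is weakly increasing in $\bar\Psi$, with at least one strictly increasing wherever the policy binds. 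The first summand I would handle by a lattice-type comparative static; the second by monotone dynamic programming.

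First I would dispose of the marginal-value term. By Assumption~\ref{ass:asymmetry}(iii), $\Psi_E$ is strictly increasing, so for $\bar\Psi < \Psi_E(1)$ there is a unique cut point $q_0(\bar\Psi)$ with $\Psi_E(q_0)=\bar\Psi$; below it $\Psi_E^{\bar\Psi}=\Psi_E$ and $\psi^{\bar\Psi}(q)=\psi(q)>0$, while above it the function is flat and $\psi^{\bar\Psi}(q)=0$. Raising $\bar\Psi$ moves $q_0$ outward, so at any fixed $q$ the marginal value $\psi^{\bar\Psi}(q)$ can only switch from $0$ to $\psi(q)$, never the reverse; hence $\psi^{\bar\Psi}(1-m)$ is weakly increasing in $\bar\Psi$, strictly at the binding margin. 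This already delivers the \emph{direct} channel: a tighter cap (lower $\bar\Psi$) truncates the ecosystem premium to zero on the constrained region.

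Next I would control the value-function term $|V_E'(m)|$. Because the entrant's per-period payoff $-s_E(1-m)+\Psi_E^{\bar\Psi}(1-m)$ is pointwise increasing in $\bar\Psi$, monotonicity of the Bellman operator $\mathcal{T}_E$ (the same operator shown to be a contraction in Theorem~\ref{thm:existence}) implies $V_E(\cdot;\bar\Psi)$ is increasing in $\bar\Psi$ by iterating from below. For the \emph{sensitivity} $|V_E'|$, write the state as $q=1-m$, so that $|V_E'(m)| = \partial V_E/\partial q$; the claim becomes that $V_E$ has increasing differences in $(q,\bar\Psi)$. The flow payoff has cross-difference $\partial\psi^{\bar\Psi}(q)/\partial\bar\Psi \ge 0$ by the previous step, hence increasing differences in $(q,\bar\Psi)$. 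I would then show that $\mathcal{T}_E$ preserves this property so the fixed point inherits it, yielding $\partial|V_E'(m)|/\partial\bar\Psi \ge 0$; combining the two channels gives $\partial s_E^*/\partial\bar\Psi > 0$.

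The hard part will be the closure step in the preceding paragraph, because this is an \emph{equilibrium} comparative static rather than a single-agent one. The transition $m' = m+\gamma(s_E - s_I^*(m))+\eta$ carries the rival's Markov strategy $s_I^*(m;\bar\Psi)$, which itself shifts when $\bar\Psi$ changes, and a naive appeal to Topkis ignores this feedback. To preserve increasing differences through $\mathcal{T}_E$ I would need to sign how $s_I^*$ responds to $\bar\Psi$ and verify it does not reverse the direct effect. Here I would lean on the inequality established in Proposition~\ref{prop:stability}, namely $\partial s_I^*/\partial m - \partial s_E^*/\partial m > 0$, which pins down the direction of the incumbent's adjustment along the equilibrium manifold and bounds the induced change in the continuation value. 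Treating the pair $(V_I,V_E)$ as a single fixed point of the coupled operator and invoking a monotone-comparative-statics argument for that operator---rather than for the entrant in isolation---should close the gap; verifying that the coupled map is monotone in $\bar\Psi$ in the appropriate product order is the one place where the argument is genuinely delicate rather than routine.
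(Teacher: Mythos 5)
Your decomposition---differentiate the entrant's first-order condition (\ref{eq:foc_e}) and sign the direct $\psi$ term and the $|V_E'|$ term separately---is essentially the route the paper itself takes: Proposition \ref{prop:policy} is stated without proof in the text, and its de facto justification is Lemma \ref{lem:comparative} in Appendix A.2, which differentiates the same first-order condition with respect to $\psi$, asserts in one sentence that higher complementarity ``steepens the value function,'' and handles the incumbent through the steady-state condition $s_I^* = s_E^*$. So on approach you and the paper agree; the difference is that the paper asserts the hard step, whereas you attempt to prove it and---correctly---discover that it is not routine. That discovery is to your credit, but it leaves your attempt with a genuine hole rather than a proof.

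Two concrete problems prevent closure. First, the step you flag yourself is a real gap, and the tool you propose for it does not fit: the inequality from Proposition \ref{prop:stability} signs $\partial s_I^*/\partial m - \partial s_E^*/\partial m$, a derivative in the \emph{state}, and says nothing about $\partial s_I^*/\partial \bar{\Psi}$, a derivative in the \emph{parameter}. It would enter an implicit-function-theorem computation of how the steady state $m^*$ shifts, but that is a different statement from the increasing-differences closure you need. Moreover, the coupled-operator monotonicity you hope for is structurally obstructed: $V_I$ is increasing in $m$ while $V_E$ is decreasing in $m$, so the pair $(V_I, V_E)$ cannot be monotone in a single product order without reversing one component, and any such reversal is undone by the coupling through the rival's strategy in the transition. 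Second, your $\min\{\Psi_E(q), \bar{\Psi}\}$ formalization creates a strictness problem you do not address: at fixed $q$, the capped marginal value $\psi^{\bar{\Psi}}(q)$ equals $0$ for $\bar{\Psi} < \Psi_E(q)$ and $\psi(q)$ for $\bar{\Psi} \geq \Psi_E(q)$---a step function in $\bar{\Psi}$, flat almost everywhere. Wherever the cap binds strictly, the direct channel therefore contributes exactly zero to $\partial s_E^*/\partial \bar{\Psi}$, and your lattice arguments yield only \emph{weak} monotonicity of the $|V_E'|$ term; the strict inequality claimed in the proposition then has no source. A complete argument would need to show that the continuation-value slope responds strictly---for instance, because the stochastic state reaches the uncapped region with positive probability under the $\eta_t$ shocks---which is precisely the step that neither you nor the paper supplies.
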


\section{Conclusion}

We set out to explain a puzzle: why do platform giants maintain compressed margins and persistent subsidies in markets they already dominate, despite achieving market concentration levels that should generate supernormal returns? The answer we have offered is that these firms are not playing the game we thought they were playing. They are not profit-maximizers in any single market; they are ecosystem optimizers, willing to subsidize indefinitely in one arena because the spillovers elsewhere more than compensate. Once you see competition through this lens, the ``irrational'' persistence of below-cost pricing starts to look perfectly rational.

The model delivers several results. The subsidization equilibrium is not a temporary aberration but a stable long-run configuration. It does not require coordination or collusion---each firm's aggressive pricing is a best response to the other's ecosystem incentives. When market share enters the convex region of the complementarity function, competitive intensity jumps discontinuously, trapping both firms in a high-subsidy equilibrium from which unilateral exit is prohibitively costly. The ``involution'' that Chinese commentators lament is not a bug. It is an emergent property of rational agents optimizing over interconnected markets.

What are the policy implications? Traditional antitrust, with its focus on high prices and consumer harm, is poorly equipped to handle ecosystem competition. The immediate beneficiaries are consumers, who enjoy subsidized services. The costs are diffuse and deferred: capital misallocated to subsidy wars, innovation foregone, soft budget constraints that distort investment across the economy. If regulators want to do something useful, they should stop looking at prices and start looking at cross-market capital flows. Data portability mandates, restrictions on intra-ecosystem transfers, transparency requirements for customer acquisition costs---these are the levers that might actually matter.

The limitations are clear. The model abstracts heavily: two firms, two markets, no multi-homing, no supply-side dynamics. Real ecosystems are messier. The empirical section offers suggestive evidence, not a definitive test. Whether $\Psi$ is convex in the relevant range, and how to estimate it from observable data, remain open questions.

Still, the core insight seems robust: when firms compete for ecosystem position rather than single-market profit, the logic of competition changes in ways that standard models miss.

Several directions warrant further work. On the theoretical side, endogenizing the number of ecosystem markets and allowing for consumer multi-homing would enrich the analysis. On the empirical side, structural estimation of $\Psi(\cdot)$ from firm-level data remains a challenge. Finally, the optimal design of regulatory intervention---balancing short-run consumer gains against long-run dynamic efficiency---deserves dedicated treatment.

\newpage
\singlespacing
\bibliographystyle{aer}

\newpage
\appendix
\section*{Appendix: Technical Proofs}

\subsection*{A.1 Proof of Theorem \ref{thm:existence}}

We provide the complete proof of existence and uniqueness of Markov Perfect Equilibrium.

\textit{Step 1: Operator Definition.} Define the Bellman operator $\mathcal{T}: \mathcal{B} \times \mathcal{B} \to \mathcal{B} \times \mathcal{B}$ where $\mathcal{B}$ is the space of bounded continuous functions on $[0,1]$ with supremum norm.

For $(V_I, V_E) \in \mathcal{B} \times \mathcal{B}$, define:
\begin{align*}
\mathcal{T}_I(V_I, V_E)(m) &= \max_{s_I \geq 0} \left\{-s_I m + \delta \int V_I(m + \gamma(s_E^*(m) - s_I) + \eta) d\Phi(\eta)\right\} \\
\mathcal{T}_E(V_I, V_E)(m) &= \max_{s_E \geq 0} \left\{-s_E(1-m) + \Psi_E(1-m) + \delta \int V_E(m') d\Phi(\eta)\right\}
\end{align*}
where $\Phi$ is the CDF of $\mathcal{N}(0, \sigma^2)$.

\textit{Step 2: Contraction Property.} We verify Blackwell's sufficient conditions.

\textit{Monotonicity:} If $V_I \leq V_I'$ pointwise, then $\mathcal{T}_I(V_I, V_E) \leq \mathcal{T}_I(V_I', V_E)$ since higher continuation values weakly increase current value.

\textit{Discounting:} For any constant $a > 0$:
\[
\mathcal{T}_I(V_I + a, V_E)(m) = \mathcal{T}_I(V_I, V_E)(m) + \delta a
\]
since the constant passes through the expectation and is scaled by $\delta < 1$.

By Blackwell's theorem, $\mathcal{T}$ is a contraction with modulus $\delta$.

\textit{Step 3: Fixed Point Existence.} The Banach fixed point theorem guarantees existence of unique $(V_I^*, V_E^*) \in \mathcal{B} \times \mathcal{B}$ satisfying $\mathcal{T}(V_I^*, V_E^*) = (V_I^*, V_E^*)$.

\textit{Step 4: Strategy Continuity.} The optimal policy correspondence is:
\[
s_I^*(m) = \argmax_{s_I \geq 0} \left\{-s_I m + \delta \E[V_I^*(m')]\right\}
\]
By the theorem of the maximum, $s_I^*(m)$ is continuous in $m$ given that $V_I^*$ is continuous and the constraint set $[0, \bar{s}]$ is compact. The same argument applies to $s_E^*(m)$. \hfill $\square$

\subsection*{A.2 Comparative Statics}

\begin{lemma}\label{lem:comparative}
Equilibrium subsidies are increasing in ecosystem complementarity:
\[
\frac{\partial s_E^*}{\partial \psi} > 0 \quad \text{and} \quad \frac{\partial s_I^*}{\partial \psi} > 0
\]
\end{lemma}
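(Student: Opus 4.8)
The plan is to treat $\psi$ as a scalar parameter that shifts the entrant's marginal ecosystem value upward---formally, to write $\Psi_E(q;\psi)$ with $\partial\Psi_E/\partial\psi>0$ and $\partial\Psi_E'/\partial\psi\ge 0$---and to sign the two comparative statics directly from the first-order conditions (\ref{eq:foc_i})--(\ref{eq:foc_e}). Since those conditions read $s_I^*=\delta\gamma V_I'$ and $s_E^*=\delta\gamma|V_E'|+\psi(1-m)$, the task reduces to signing the cross-partials $\partial V_I'/\partial\psi$ and $\partial V_E'/\partial\psi$ of the equilibrium value functions. The entrant's result carries a transparent direct channel---the explicit $\psi(1-m)$ term contributes $\partial\psi(1-m)/\partial\psi>0$---while the incumbent's result is purely strategic, since $\Psi_I\equiv 0$ leaves $s_I^*$ responsive to $\psi$ only through the equilibrium interaction.

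First I would pin down the level response as a warm-up template. Differentiating the Bellman equation (\ref{eq:bellman_e}) with respect to $\psi$ and invoking the envelope theorem kills the policy-derivative term, leaving a linear fixed-point equation whose forcing term is the strictly positive direct shock $\partial\Psi_E(1-m)/\partial\psi$. This operator inherits the contraction modulus $\delta$ from Theorem \ref{thm:existence} and is monotone, so its unique fixed point satisfies $\partial V_E/\partial\psi>0$. But the objects that actually enter the first-order conditions are the \emph{slopes}, not the levels. So the real work is to differentiate the two envelope conditions for $V_I'$ and $V_E'$---obtained by differentiating (\ref{eq:bellman_i})--(\ref{eq:bellman_e}) in $m$---with respect to $\psi$, producing a second coupled linear system in $(\partial V_I'/\partial\psi,\partial V_E'/\partial\psi)$. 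Its forcing terms are the positive entrant shock together with the opponent-slope terms $\partial s_E^*/\partial m<0$ and $\partial s_I^*/\partial m>0$ already established in the proof of Proposition \ref{prop:stability}. Those slope signs are exactly the strategic-complementarity content of the model: a more aggressive entrant steepens the incumbent's marginal valuation of share, and conversely.

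The conclusion then follows by solving that second system and substituting into the first-order conditions. I expect to show $\partial V_E'/\partial\psi\le 0$---so that $|V_E'|$ rises and reinforces the direct term---giving $\partial s_E^*/\partial\psi=\delta\gamma\,\partial|V_E'|/\partial\psi+\partial\psi(1-m)/\partial\psi>0$; and $\partial V_I'/\partial\psi\ge 0$, giving $\partial s_I^*/\partial\psi=\delta\gamma\,\partial V_I'/\partial\psi>0$. The economic reading is that higher $\psi$ makes the entrant bid harder for share, which---under the survival constraint $\underline{m}$---forces the incumbent to defend more vigorously, so both subsidies climb together.

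The hard part will be the coupled slope system rather than the level argument. Because $\partial s_E^*/\partial m$ and $\partial s_I^*/\partial m$ each depend on the other firm's value-function curvature, the $\psi$-derivatives of the two slopes are jointly determined, and one cannot read off their signs by one-directional reasoning. The clean route is to show that the coefficient matrix of the differentiated slope system is an $M$-matrix---diagonally dominant with off-diagonal signs inherited from $\partial s_E^*/\partial m<0$ and $\partial s_I^*/\partial m>0$---so that its inverse is nonnegative and therefore maps the positive entrant forcing term to a sign-definite solution. A secondary complication is the kink in $V_I$ at the survival boundary, where $V_I(m)=0$ for $m<\underline{m}$ by Assumption \ref{ass:survival}; I would confine the argument to the interior region $(\underline{m},1-\underline{m})$ in which the equilibrium state actually lives and in which the value functions are twice differentiable, so that the envelope differentiations are legitimate.
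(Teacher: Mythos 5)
Your entrant-side argument begins the same way as the paper's (differentiate the first-order condition (\ref{eq:foc_e}) in $\psi$, then sign the indirect term $\delta\gamma\,\partial|V_E'|/\partial\psi$), and your instinct that this indirect term is where the real mathematical content lies is correct --- the paper simply asserts it is positive. But the step that would actually deliver both signs in your plan --- showing that the coefficient matrix of the differentiated slope system is an $M$-matrix, so that its inverse is nonnegative and maps the positive forcing term to sign-definite cross-partials --- is announced rather than proved, and it cannot be established from the paper's assumptions. Diagonal dominance of that system involves the magnitudes $\delta\gamma V_I''$, $\delta\gamma V_E''$, and $\psi'$, none of which are bounded by Assumptions \ref{ass:demand}--\ref{ass:regularity}. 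Worse, a uniform $M$-matrix structure on all of $(\underline{m},1-\underline{m})$ would contradict the paper's own Proposition \ref{prop:bifurcation}: in the convex region the equilibrium correspondence folds (multiple fixed points, a saddle-node collision), so $\partial s^*/\partial\psi$ diverges near the fold and changes character across branches; no globally valid sign-definite comparative static of the kind you are constructing can exist there. The sign $\partial V_I'/\partial\psi\ge 0$ is likewise not self-evident --- a more aggressive entrant lowers $V_I$ everywhere, but whether it steepens or flattens $V_I$ depends on where the losses bite relative to the survival boundary of Assumption \ref{ass:survival} --- and your plan defers exactly this question to the unproven matrix step. So as written the proposal has a genuine gap at its central link.

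The gap is also avoidable, because the paper's incumbent-side argument is far cheaper than yours: at any interior steady state the drift in (\ref{eq:dynamics_simple}) must vanish, so $s_I^*(m^*)=s_E^*(m^*)$ (this is the steady-state condition in Proposition \ref{prop:stability}), and totally differentiating this identity in $\psi$ transfers the entrant's comparative static to the incumbent, $\partial s_I^*/\partial\psi=\partial s_E^*/\partial\psi>0$, with no need to sign $\partial V_I'/\partial\psi$ at all. Note that the lemma is then a statement about steady-state subsidies, not about $s_I^*(\cdot)$ pointwise in $m$; your version aims at the strictly stronger pointwise claim, which is a large part of why it runs into the wall above. If you want to keep your more ambitious route, you must add explicit restrictions under which diagonal dominance can actually be verified --- for instance $\delta\gamma$ small relative to the value-function curvatures, and the state restricted outside the convex region $[\underline{q},\bar{q}]$ --- and state them as hypotheses; without them the $M$-matrix step is a conjecture, not a proof.
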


\begin{proof}
Differentiating the first-order condition (\ref{eq:foc_e}) with respect to $\psi$:
\[
\frac{\partial s_E^*}{\partial \psi} = 1 + \delta\gamma \frac{\partial |V_E'|}{\partial \psi}
\]
The second term is positive since higher $\psi$ increases the value of market share, steepening the value function. Thus $\partial s_E^*/\partial \psi > 0$.

For the incumbent, higher $s_E^*$ implies the incumbent must increase $s_I^*$ to maintain market share above $\underline{m}$. Formally, totally differentiating the steady-state condition $s_I^* = s_E^*$ (which must hold at interior steady state) yields $\partial s_I^*/\partial \psi = \partial s_E^*/\partial \psi > 0$. \hfill $\square$
\end{proof}

\end{document}